\newcommand{\tran}[1]{\xrightarrow[]{#1}}
\def\ff{\mathit{ff}}
\def\tt{\mathit{tt}}
\def\f#1{\mathit{xnf}(#1)}
\def\u#1{\mathit{PA}(#1)}
\def\ue{ELIMINATION }
\def\sp{SAT\_PURSUING }
\def\up{CONFLICT\_ANALYZE }
\def\aalta{Aalta}
\def\nuxmv{NuXmv}
\def \tool {Aalta\_v2.0\xspace}
\begin{document}

\title{SAT-Based Explicit LTL Reasoning}
\author{Jianwen Li\inst{1,2} \and 
Shufang Zhu\inst{2} \and
Geguang Pu\inst{2} \and
Moshe Y. Vardi\inst{1}
}

\institute
{
  \inst{}
Department of Computer Science, Rice University, USA 
\and
  \inst{}%
Shanghai Key Laboratory of Trustworthy Computing, East China Normal University, 
 P. R.China
}

\maketitle
\begin{abstract}

We present here a new explicit reasoning framework for linear temporal logic 
(LTL), which is built on top of propositional satisfiability (SAT) solving. 
As a proof-of-concept of this framework, we describe a new LTL satisfiability algorithm. 
We implemented the algorithm in a tool, \tool, which is built on top of the Minisat SAT solver. 
We tested the effectiveness of this approach by demonstrating that \tool significantly outperforms 
all existing LTL satisfiability solvers. 

\end{abstract}

\section{Introduction}
Linear Temporal Logic (LTL) was introduced into program verification in \cite{Pnu77}. 
Since then it has been widely accepted as a language for the specification of ongoing
computations \cite{MP92} and it is a key component in the verification of reactive 
systems \cite{CGP99,Hol03}. Explicit temporal reasoning, which involves an explicit construction
of temporal transition systems, is a key algorithmic component in this context. For example, 
explicitly translating LTL formulas to B\"uchi automata is a key step both in 
explicit-state model checking \cite{GPVW95} and in runtime verification~\cite{TRV12}. 
LTL satisfiability checking, a step that should take place \emph{before} verification, 
to assure consistency of temporal requirements, also uses explicit reasoning \cite{RV10}.
These tasks are known to be quite demanding computationally for complex temporal
properties \cite{GPVW95,RV10,TRV12}. A way to get around this difficulty is to
replace explicit reasoning by symbolic reasoning, e.g., as in BDD-based or SAT-based
model checking \cite{McM93,McM03}, but in many cases the symbolic approach is
inefficient \cite{RV10} or inapplicable \cite{TRV12}. Thus, explicit temporal reasoning
remains an indispensable algorithmic tool.

The main approach to explicit temporal reasoning is based on the \emph{tableau} technique,
in which a recursive \emph{syntactic} decomposition of temporal formulas drives the construction
of temporal transition systems. This approach is based on the technique of \emph{propositional
tableau}, whose essence is search via \emph{syntactic splitting} \cite{dAgo99}. This is in contrast
to modern propositional satisfiability (SAT) solvers, whose essence is search via \emph{semantic splitting}
\cite{MZ09}.  The tableau approach to temporal reasoning underlies both the best LTL-to-automata 
translator \cite{DP04} and the best LTL-satisfiability checker \cite{LZPVH13}. Thus, we have a 
situation where in the symbolic setting much progress is being attained both by the impressive 
improvement in the capabilities of modern SAT solvers \cite{MZ09} as well as new SAT-based model-checking 
algorithms \cite{Bra11,CS12}, while progress in explicit temporal reasoning is slower and does not 
fully leverage modern SAT solving. (It should be noted that several LTL satisfiability solvers,
including {\aalta} \cite{LPZHVCoRR14} and ls4 \cite{SW12} do employ SAT solvers,
but they do so as an \emph{aid} to the main reasoning engine, rather than serve as the \emph{main}
reasoning engine.)

Our main aim in this paper is to study how SAT solving can be \emph{fully leveraged} in explicit 
temporal reasoning.  The key intuition is that explicit temporal reasoning consists of construction 
of states and transitions, subject to temporal constraints. Such temporal constraints can be reduced to
a sequence of Boolean constraints, which enables the application of SAT solving. This idea underlies
the complexity-theoretic analysis in \cite{Var89}, and has been explored in the context of modal
logic \cite{GS96}, but not yet in the context of explicit temporal reasoning.  Our belief
is that SAT solving would prove to be superior to tableau in that context.

We describe in this paper a general framework for SAT-based explicit temporal reasoning.
The crux of our approach is a construction of temporal transition system
that is based on SAT-solving rather than tableau to construct states and transitions.
The obtained transition system can be used for LTL-satisfiability solving, LTL-to-automata
translation, and runtime-monitor construction.

As proof of concept for the new framework, we use it to develop a SAT-based algorithm for 
LTL-satisfiability checking. We also propose several heuristics to speed up the checking 
by leveraging SAT solvers. We implemented the algorithm and heuristics in an LTL-satisfiability 
solver \tool. 
To evaluate its performance, we compared it against {\aalta}, the existing  
best-of-breed LTL-satisfiability solver \cite{LZPVH13,LPZHVCoRR14}, which is tableau-based. 
We also compare it against {\nuxmv}, a symbolic LTL-satisfiability solver that is based on 
cutting-edge SAT-based model-checking algorithms \cite{Bra11,CS12}, which outperforms {\aalta}.
We show that our explicit SAT-based LTL-satisfiability solver outperforms both.

In summary, the contributions in this paper are as follows:
\begin{itemize}
\item We propose a SAT-based explicit LTL-reasoning framework.
\item We show a successful application of the framework to LTL-satisfiability checking, 
by designing a novel algorithm and efficient heuristics.
\item We compare our new framework for LTL-satisfiability checking with existing approaches. 
The experimental results demonstrate that our tool significantly outperforms other existing 
LTL satisfiability solvers.   
\end{itemize}

The paper is organized as follows. Section \ref{sec:pre} provides technical background.
Section \ref{sec:framework} introduces the new SAT-based explicit-reasoning 
framework.  Section \ref{sec:ltlsat} describes in detail the application to LTL-satisfiability checking. 
Section \ref{sec:experiment} shows the experimental results for LTL-satisfiability checking. 
Finally Section \ref{sec:discussion} provides concluding remarks. Missing proofs are in the Appendix. 

\section{Preliminaries}\label{sec:pre}

Linear Temporal Logic (LTL) is considered as an extension of propositional logic, 
in which temporal connectives $X$ (next) and $U$ (until) are introduced. 
Let $AP$ be a set of atomic properties.  The syntax of LTL formulas is defined by:
\begin{align*}
\phi\ ::=\ \tt \mid \ff\mid a \mid \neg \phi \ |\ \phi\wedge \phi\ |\ \phi\vee \phi\ |\ \phi U \phi\ |\ X \phi
\end{align*}
where $a\in AP$, $\tt$ is \textit{true} and $\ff$ is \textit{false}. We introduce the $R$ (release) connectives 
as the dual of $U$, which means $\phi R\psi \equiv \neg (\neg\phi U\neg\psi)$. We also use the usual
abbreviations: $Fa=\tt Ua$, and $Ga=\ff Ra$. 

We say that $a$ is a \emph{literal} if it is an atomic proposition or its negation. 
Throughout the paper, we use $L$ to denote the set of literals, lower case letters 
$a,b,c,l$ to denote literals, $\alpha$ to denote propositional formulas, and 
$\phi, \psi$ for LTL formulas.  We consider LTL formulas in negation normal form (NNF), 
which can be achieved by pushing all negations in front of only atoms. 
Since we consider LTL in NNF, formulas are interpreted here on infinite literal sequences, 
whose alphabet is $\Sigma:= 2^L$. 

A \emph{trace} $\xi=\omega_0\omega_1\omega_2\ldots$ is an infinite sequence in
$\Sigma^\omega$. For $\xi$ and $k\geq 1$ we use $\xi ^k=\omega_0\omega_1\ldots
\omega_{k-1}$ to denote a prefix of $\xi$,
and $\xi _k =\omega_k\omega_{k+1}\ldots$ to denote a suffix of $\xi$.
Thus, $\xi=\xi^k \xi_k$. 
The semantics of LTL with respect to an infinite trace $\xi$ is given by: 
\begin{itemize}
\item $\xi\models \alpha$ iff $\xi^1\models\alpha$, where $\alpha$ is a propositional formula;
\item $\xi \models X\ \phi$ iff $\xi_1\models \phi$;
\item $\xi \models \phi_1\ U\ \phi_2$ iff there exists $i\geq 0$
such that $\xi_i\models \phi_2$ and for all $0 \leq j < i$, $\xi_j\models \phi_1$;
\item $\xi \models \phi_1\ R\ \phi_2$ iff for all $i\geq 0$, it holds 
  $\xi_i\vDash\phi_2$ or there exists $0\leq j\leq i$ such that $\xi_j\models\phi_1$.
\end{itemize}

The \textit{closure} of an LTL formula $\phi$, denoted as $cl(\phi)$, 
is a formula set such that: (1). $\phi$ is in $cl(\phi)$; 
(2). $\psi$ is in $cl(\phi)$ if $\phi=X\psi$ or $\phi=\neg\psi$; 
(3). $\phi_1,\phi_2$ are in $cl(\phi)$ if $\phi=\phi_1\ op\ \phi_2$, 
where $op$ can be $\wedge,\vee,U$ and $R$; 
(4). $(X\psi)\in cl(\phi)$ if $\psi\in cl(\phi)$ and $\psi$ is an Until or Release formula. 
We say each $\psi$ in $cl(\phi)$, which is added via rules (1)-(3), is a \textit{subformula} 
of $\phi$.  Note that the standard definition of LTL closure consists only of rules (1)-(3).
Rule (4) is added in this paper due to its usage in later sections. 
Note that the size of $cl(\phi)$ is linear in the length of $\phi$, 
even with the addition of rule (4).

\section{Explicit LTL Reasoning}\label{sec:framework}
In this section we introduce the framework of explicit LTL reasoning.  
To demonstrate clearly both the similarity and difference between our 
approach and previous ones, we organize this section as follows. 
We first provide a general definition of temporal transition systems,
which underlies both our new approach and previous approach. We then 
discuss how traditional methods and our new one relate to this framework.

\subsection{Temporal Transition System}
As argued in \cite{Var89c,GS96}, the key to efficient \emph{modal} reasoning is to reason 
about states and transitions \emph{propositionally}. We show here how the same approach
can be applied to LTL. Unlike \emph{modal logic}, where there is a clear separation between
formulas that talk about the current state and formulas that talk about successor states
(the latter are formulas in the scope of $\Box$ or $\Diamond$, i.e. G or F in LTL), LTL formulas do not allow for 
such a clean separation. Achieving such a separation requires some additional work.

We first define propositional satisfiability of LTL formulas.
\begin{definition}[Propositional Satisfiability]\label{def:propsat}
For an LTL formula $\phi$, a \emph{propositional assignment for} $\phi$ is 
a set $A\subseteq cl(\phi)$ such that
  \begin{itemize}
    \item every literal $\ell\in L$ is either in $A$ or its negation is, but not both.
    \item $(\theta_1\wedge \theta_2)\in A$ implies $\theta_1\in A$ and $\theta_2\in A$,
    \item $(\theta_1\vee \theta_2)\in A$ implies $\theta_1\in A$ or $\theta_2\in A$,
    \item $(\theta_1 U \theta_2)\in A$ implies $\theta_2\in A$ or both $\theta_1\in A$
          and $(X(\theta_1 U \theta_2))\in A$. 
          In the former case, that is, $\theta_2\in A$, we say that $A$ satisfies 
          $(\theta_1 U \theta_2)$ \emph{immediately}. In the latter case,
          we say that $A$ \emph{postpones} $(\theta_1 U \theta_2)$. 
    \item $(\theta_1 R \theta_2)\in A$ implies $\theta_2\in A$ and either $\theta_1\in A$
          or $(X(\theta_1 R \theta_2))\in A$. 
          In the former case, that is, $\theta_1\in A$, we say that $A$ satisfies 
          $(\theta_1 R \theta_2)$ \emph{immediately}. In the latter case,
          we say that $A$ \emph{postpones} $(\theta_1 R \theta_2)$. 
  \end{itemize}
We say that a propositional assignment $A$ \emph{propositional satisfies} $\phi$, 
denoted as $A\models_p\phi$, if $\phi\in A$.  We say an LTL formula $\phi$ 
is \emph{propositionally satisfiable} if there is a propositional assignment $A$ for $\phi$
such that $A\models_p\phi$. 
\end{definition}

For example, consider the formula $\phi=(a U b) \wedge (\neg b)$. 
The set $A_1=\{a, (a U b), (\neg b),\\ (X(aUb))\}\subseteq cl(\phi)$ is a propositional assignment 
that propositionally satisfies $\phi$.  In contrast, the set $A_2=\{(a U b), \neg b\}\subseteq cl(\phi)$ 
is not a propositional assignment.

The following theorem shows the relationship between LTL formula $\phi$ and its propositional assignment. 
\begin{theorem}\label{thm:propassign}
For an LTL formula $\phi$ and an infinite trace $\xi\in \Sigma^{\omega}$, 
we have that $\xi\models\phi$ iff 
there exists a propositional assignment $A\subseteq cl(\phi)$ such that 
$A$ propositionally satisfies $\phi$ and $\xi\models \bigwedge A$.
\end{theorem}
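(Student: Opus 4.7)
The easy direction is the ``if'' direction: suppose $A \subseteq cl(\phi)$ is a propositional assignment with $A \models_p \phi$ and $\xi \models \bigwedge A$. By definition of $\models_p$ we have $\phi \in A$, and by hypothesis $\xi$ satisfies every conjunct of $\bigwedge A$, so in particular $\xi \models \phi$. This direction does not use any of the closure conditions and is essentially a tautology.

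For the ``only if'' direction, the natural plan is to define the candidate set semantically and then verify it meets every clause of Definition~\ref{def:propsat}. Specifically, I would set
\[
A \;=\; \{\psi \in cl(\phi) : \xi \models \psi\}.
\]
Then $\phi \in A$ since $\xi \models \phi$, and $\xi \models \bigwedge A$ is immediate by construction. It remains to check that $A$ is a propositional assignment. For the literal clause, assuming (as is standard for $\Sigma = 2^L$) that each letter of the trace contains exactly one of $\ell$ and $\neg\ell$ for each atomic proposition, the semantics of propositional formulas ($\xi \models \alpha$ iff $\xi^1 \models \alpha$) gives the required ``exactly one'' property. The $\wedge$ and $\vee$ clauses are routine: if $\theta_1 \wedge \theta_2 \in A$ then $\xi \models \theta_1$ and $\xi \models \theta_2$, and by closure rule (3) both $\theta_i$ lie in $cl(\phi)$, hence in $A$; the $\vee$ case is analogous.

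The clauses for $U$ and $R$ are where the real content sits, and where I expect the only subtlety. For $(\theta_1 U \theta_2)\in A$, unfold the LTL semantics: there is a least $i\geq 0$ with $\xi_i \models \theta_2$ and $\xi_j \models \theta_1$ for all $j<i$. If $i=0$, then $\xi \models \theta_2$, so $\theta_2 \in cl(\phi)$ (rule (3)) lands in $A$ and we have immediate satisfaction. If $i>0$, then $\xi \models \theta_1$ and $\xi_1 \models \theta_1 U \theta_2$, i.e.\ $\xi \models X(\theta_1 U \theta_2)$; here I would point out that this is precisely where closure rule (4) is needed, because without it $X(\theta_1 U \theta_2)$ might not be in $cl(\phi)$ and therefore could not be placed into $A$. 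Since rule (4) adds it, $X(\theta_1 U \theta_2)\in A$ and the postponement case is established. The argument for $R$ is symmetric, using $\xi_i \models \theta_2$ for all $i \geq 0$ unless some $\xi_j \models \theta_1$ witnesses immediate release.

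The main obstacle, then, is not any deep semantic argument but the clean alignment between the LTL semantics of $U$/$R$ and the syntactic closure clauses; the whole proof is really a case analysis whose only nontrivial observation is the role of closure rule (4) in making the ``postpones'' case well-typed. A short remark at the end of the proof, flagging that rule (4) is exactly what enables this construction, would motivate the otherwise unusual extension of the standard closure definition.
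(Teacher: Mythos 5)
Your proposal is correct and follows essentially the same route as the paper's own proof: the ``if'' direction is immediate from $\phi\in A$, and the ``only if'' direction defines $A=\{\theta\in cl(\phi):\xi\models\theta\}$ and verifies each clause of Definition~\ref{def:propsat} by the same case analysis. Your added remark that closure rule (4) is precisely what makes the postponement case well-typed is a useful observation the paper leaves implicit, but it does not change the argument.
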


Since a propositional assignment of LTL formula $\phi$ contains the information for both current and next states, 
we are ready to define the \emph{transition systems} of LTL formula.

\begin{definition}\label{def:rs}
Given an LTL formula $\phi$, the \emph{transition system} $T_{\phi}$ is a tuple $(S,S_0,T)$ where 
\begin{itemize}
\item 
$S$ is the set of states $s \subseteq cl(\phi)$ that are
propositional assignments for $\phi$. The \emph{trace} of a state $s$ is $s\cap L$,
that is, the set of literals in $s$.
\item
$S_0\subseteq S$ is a set of \emph{initial states}, where $\phi\in s_0$ for all $s_0\in S_0$.
\item 
$T: S\times S$ is the transition relation, where $T(s_1,s_2)$ holds if
$(X\theta)\in s_1$ implies $\theta\in s_2$, for all $X\theta\in cl(\phi)$.
\end{itemize}
A \emph{run} of $T_\phi$ is an infinite sequence $s_0,s_1,\ldots$ such that
$s_0\in S_0$ and $T(s_i,s_{i+1})$ holds for all $i\geq 0$.
\end{definition}

Every run $r=s_0,s_1,\ldots$ of $T_\phi$ induces a trace $trace(r)=trace(s_0),trace(s_1),\ldots$ in $\Sigma^\omega$.
In general, it needs not hold that $trace(r)\models\phi$. This requires an additional condition.
Consider an Until formula $(\theta_1 U \theta_2)\in s_i$. Since $s_i$ is a propositional assignment for
$\phi$ we either have that $s_i$ satisfies $(\theta_1 U \theta_2)$ immediately or that it postpones it,
and then $(\theta_1 U \theta_2)\in s_{i+1}$. If $s_j$ postpones $(\theta_1 U \theta_2)$ for all $j\geq i$,
then we say that $(\theta_1 U \theta_2)$ is \emph{stuck} in $r$.

\begin{theorem}\label{thm:reasoning}
Let $r$ be a run of $T_{\phi}$. If \emph{no} Until subformula is stuck at $r$, then $trace(r)\models\phi$.
Also, $\phi$ is satisfiable if there is a run $r$ of $T_\phi$ so that no Until subformula is stuck at $r$.
\end{theorem}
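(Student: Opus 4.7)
The plan is to derive the second statement from the first (any run with no stuck Until yields a satisfying trace, so $\phi$ has a model), and for the first, to prove a stronger claim by structural induction: \emph{for every $\psi\in cl(\phi)$ and every $i\geq 0$, if $\psi\in s_i$ then $\xi_i\models\psi$}, where $\xi=trace(r)$. Applying this to $\phi\in s_0$ yields $\xi\models\phi$. The induction is on the structure of $\psi$, but the statement must be quantified over all positions $i$ so that the temporal cases can shift along the run.

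The base cases are immediate: a literal $\ell\in s_i$ belongs to $trace(s_i)$ by definition, and the constants $\tt,\ff$ are handled trivially. The Boolean connectives $\wedge$ and $\vee$ follow directly from the propositional-assignment clauses of Definition~\ref{def:propsat} combined with the inductive hypothesis at the same position $i$. For $X\theta\in s_i$, Definition~\ref{def:rs} forces $\theta\in s_{i+1}$, so the inductive hypothesis at position $i+1$ gives $\xi_{i+1}\models\theta$, hence $\xi_i\models X\theta$.

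The central case is Until, and this is the step that actually uses the no-stuck hypothesis. Suppose $(\theta_1 U\theta_2)\in s_i$. Walking along the run, at each state $s_j$ with $(\theta_1 U\theta_2)\in s_j$ the assignment clause says either $\theta_2\in s_j$ (immediate satisfaction) or $\theta_1\in s_j$ together with $X(\theta_1 U\theta_2)\in s_j$, which by the transition relation propagates $(\theta_1 U\theta_2)$ to $s_{j+1}$. Because $(\theta_1 U\theta_2)$ is not stuck in $r$, this postponement cannot continue forever, so there is a smallest $k\geq i$ with $\theta_2\in s_k$, and $\theta_1\in s_j$ for all $i\leq j<k$. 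The inductive hypothesis applied at positions $i,\ldots,k$ gives $\xi_k\models\theta_2$ and $\xi_j\models\theta_1$ for $i\leq j<k$, so $\xi_i\models\theta_1 U\theta_2$. The Release case is handled analogously but is easier, since the ``stuck'' obstruction does not arise: either some $k\geq i$ satisfies Release immediately (so $\theta_1\in s_k$ and $\theta_2\in s_j$ for $i\leq j\leq k$), or $\theta_2\in s_j$ for every $j\geq i$, and in either subcase the semantics of $R$ is met.

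The main obstacle I expect is simply setting up the induction correctly: the claim must range over all positions $i$ so that the Until/Release/Next cases can invoke the hypothesis at later states, and one must be careful that closure rule (4) of Section~\ref{sec:pre} ensures $X(\theta_1 U\theta_2)$ and $X(\theta_1 R\theta_2)$ lie in $cl(\phi)$, which is needed for the transition relation to propagate the temporal obligations and for the induction to stay inside $cl(\phi)$. Once this scaffolding is in place, the no-stuck hypothesis is used in exactly one spot, namely to guarantee that the postponement chain for Until terminates in finitely many steps.
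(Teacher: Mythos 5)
Your proposal is correct and matches the paper's own argument: the same position-indexed structural induction showing $\xi_i\models\psi$ for every $\psi\in s_i$, with the no-stuck hypothesis invoked only to terminate the Until postponement chain, and the satisfiability claim obtained as an immediate corollary. The only difference is that the paper's appendix proof additionally establishes the converse (a satisfying trace yields a run with no stuck Until), which is not part of the stated theorem but is used later in the proof of Theorem~\ref{thm:satreasoning}.
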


We have now shown that the temporal transition system $T_\phi$ is intimately related to the
satisfiability of $\phi$. The definition of $T_\phi$ is, however, rather nonconstructive.
In the next subsection we discuss how to construct $T_\phi$.

\subsection{System Construction}

First, we show how one can consider LTL formulas as propositional ones. This 
requires considering temporal subformulas as \textit{propositional atoms}. 
We now define the \textit{propositional atoms} of LTL formulas. 
 
\begin{definition}[Propositional Atoms]\label{def:propatoms}
For an LTL formula $\phi$, we define the set of \textit{propositional atoms} of $\phi$, 
i.e. $\u{\phi}$, as follows:
  \begin{enumerate}
    \item $\u{\phi}=\{\phi\}$ if $\phi$ is an atom, Next, Until or Release formula;
    \item $\u{\phi} = \u{\psi}$ if $\phi=(\neg\psi)$;
    \item $\u{\phi} = \u{\phi_1}\cup\u{\phi_2}$ if $\phi=(\phi_1\wedge\phi_2)$ or $\phi=(\phi_1\vee\phi_2)$.
  \end{enumerate}
\end{definition}

Consider, for example, the formula $\phi= (a\wedge (a U b)\wedge \neg (X (a\vee b)))$. 
Here we have $\u{\phi}$ is $\{a, (a U b), (X(a\vee b))\}$. Intuitively, the propositional atoms 
are obtained by treating all temporal subformulas of $\phi$ as atomic propositions. Thus,
an LTL formula $\phi$ can be viewed as a propositional formula over $\u{\phi}$. 

\begin{definition}
For an LTL formula $\phi$, let $\phi^p$ be $\phi$ considered as a propositional 
formula over $\u{\phi}$.
\end{definition}

We now introduce the \textit{neXt Normal Form} (XNF) of LTL formulas, which separates the 
``current'' and ``next-state'' parts of the formula, but costs only linear in the original formula size. 

\begin{definition}[neXt Normal Form]\label{def:xnf}
An LTL formula $\phi$ is in \emph{neXt Normal Form} (XNF) 
if there are no Unitl or Release subformulas of $\phi$ in $\u{\phi}$. 
\end{definition}

For example, $\phi= (a U b)$ is not in XNF,  while $(b\vee (a\wedge (X(a U b))))$ is in XNF.
Every LTL formula $\phi$ can be converted, with linear in the formula size, 
to an equivalent formula in XNF.
\begin{theorem}\label{thm:xnf}
For an LTL formula $\phi$, there is an equivalent formula $\f{\phi}$ that is in XNF.
Furthermore, the cost of the conversion is linear.
\end{theorem}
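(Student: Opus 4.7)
The plan is to define $\f{\phi}$ by structural induction on $\phi$, exploiting the LTL expansion identities
\[
  \phi_1 U \phi_2 \equiv \phi_2 \vee \bigl(\phi_1 \wedge X(\phi_1 U \phi_2)\bigr),\qquad
  \phi_1 R \phi_2 \equiv \phi_2 \wedge \bigl(\phi_1 \vee X(\phi_1 R \phi_2)\bigr),
\]
to push every occurrence of $U$ and $R$ underneath an $X$. Concretely, set $\f{\phi}=\phi$ when $\phi$ is a literal or of the form $X\psi$; set $\f{\phi_1\wedge\phi_2}=\f{\phi_1}\wedge\f{\phi_2}$ and $\f{\phi_1\vee\phi_2}=\f{\phi_1}\vee\f{\phi_2}$; and for the temporal binary connectives, use the expansions above, applying the $\mathit{xnf}$ transform only to the two ``current-state'' copies of $\phi_1,\phi_2$ while leaving $\phi_1 U \phi_2$ (resp.\ $\phi_1 R \phi_2$) unchanged inside the $X$.

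Then I would verify the three conclusions in turn. First, \emph{equivalence} $\phi\equiv\f{\phi}$ follows by structural induction: the literal/next and Boolean cases are immediate from the induction hypothesis, and the $U,R$ cases are exactly the expansion laws quoted above. Second, \emph{XNF-ness} follows because $\u{\cdot}$ distributes over $\wedge,\vee,\neg$ and stops at literals, $X$-formulas, and $U,R$-formulas; by induction $\u{\f{\phi_i}}$ contains neither Until nor Release, and in the $U,R$ cases the only additional element is $\u{X(\phi_1 U \phi_2)}=\{X(\phi_1 U \phi_2)\}$ (a Next formula, hence a propositional atom but not an Until/Release), so $\u{\f{\phi}}$ remains free of $U$ and $R$.

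Third, for \emph{linearity}, the most natural formalisation is to count the size of $\f{\phi}$ as a DAG with subformula sharing (equivalently, to bound $|cl(\f{\phi})|$). The expansion for $\phi_1 U \phi_2$ introduces a constant number of fresh nodes (one $\vee$, one $\wedge$, one $X$, and a shared pointer to the original $\phi_1 U \phi_2$ together with its already-present subformulas); the Release case is symmetric, and the other cases create at most one fresh node. Hence the number of nodes is bounded by $c\cdot|\phi|$ for a small constant $c$, and a single recursive traversal suffices to build $\f{\phi}$.

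The only subtle point is the linearity claim: a naive tree-size induction gives a recurrence $T(\phi_1 U \phi_2)\le T(\phi_1)+T(\phi_2)+|\phi_1|+|\phi_2|+O(1)$, which is superlinear. I therefore expect the main obstacle to be articulating clearly that linearity is meant with respect to DAG size (equivalently, the size of the closure), and in particular that the subterm $\phi_1 U \phi_2$ reused inside $X$ need not be re-expanded, so exactly $O(1)$ new nodes are added per syntactic occurrence of a connective in $\phi$.
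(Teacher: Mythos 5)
Your construction is exactly the one in the paper: the same inductive definition of $\f{\cdot}$ via the Until/Release expansion laws (recursing only into the current-state copies and leaving $\phi_1 U\phi_2$, resp.\ $\phi_1 R\phi_2$, untouched under the $X$), with equivalence following from the expansion identities and XNF-ness from the fact that the only new propositional atoms are Next formulas. Your caveat about linearity is well taken and in fact sharper than the paper's one-line justification: as a syntax tree the result can be superlinear for nested Until/Release, and the claim is properly read as linearity of the DAG size (equivalently of $|cl(\phi)|$, or of $\phi^p$ viewed over $\u{\phi}$ where each $X\psi$ is a single atom), which is exactly the measure relevant for the SAT encoding.
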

\begin{proof}
To construct $\f{\phi}$, We can apply the expansion rules 
$(\phi_1U\phi_2)\equiv(\phi_2\vee(\phi_1\wedge X(\phi_1U\phi_2)))$ and 
$(\phi_1R\phi_2)\equiv(\phi_2\wedge(\phi_1\vee X(\phi_1R\phi_2)))$. 
In detail, we can construct $\f{\phi}$ inductively:
  \begin{enumerate}
    \item $\f{\phi} = \phi$ if $\phi$ is $\tt$, $\ff$, a literal $l$ or a Next formula $X\psi$;
    \item $\f{\phi} = \f{\phi_1}\wedge\f{\phi_2}$ if $\phi=(\phi_1\wedge\phi_2)$;
    \item $\f{\phi} = \f{\phi_1}\vee\f{\phi_2}$ if $\phi=(\phi_1\vee\phi_2)$;
    \item $\f{\phi} = (\f{\phi_2})\vee(\f{\phi_1}\wedge X\phi)$ if $\phi=(\phi_1 U\phi_2)$;
    \item $\f{\phi} = \f{\phi_2}\wedge(\f{\phi_1}\vee X\phi)$ if $\phi=(\phi_1 R\phi_2)$.
  \end{enumerate} 
Since the construction is built on the two expansion rules that preserve the equivalence of formulas, 
it follows that $\phi$ is logically equivalent to $\f{\phi}$.
Note that the conversion map $\f{\phi}$ doubles the size of the converted formula $\phi$, but since
the conversion puts Until and Release subformulas in the scope of Next, and the conversion
stops when it comes to Next subformulas, the cost is at most linear.\qed
\end{proof}

We can now state propositional satisfiability of LTL formulas in terms of satisfiability
of propositional formulas. That is, by restricting LTL formulas to XNF, a satisfying assignment 
of $\phi^p$, which can be obtained by using a SAT solver, corresponds precisely to a propositional assignment of formula $\phi$.
\begin{theorem}\label{thm:assign}
For an LTL formula $\phi$ in XNF, if there is a satisfying assignment $A$ of $\phi^p$,
then there is a propositional assignment $A'$ of $\phi$ that satisfies $\phi$
such that $A'\cap\u{\phi}\subseteq A$. 
Conversely, if there is a propositional assignment $A'$ of $\phi$ that satisfies $\phi$, 
then there is a satisfying assignment $A$ of $\phi^p$ such that $A'\cap\u{\phi}\subseteq A$. 
\end{theorem}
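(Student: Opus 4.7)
The plan is to prove both directions by induction on the propositional structure of $\phi$, exploiting the crucial property of XNF: since $\phi$ contains no top-level Until or Release subformula (by Definition~\ref{def:xnf}, all such occur within the scope of some Next), $\phi^p$ is a purely Boolean combination (built from $\wedge$ and $\vee$, given NNF) of atoms in $\u{\phi}$, which consists only of literals and Next-formulas. This clean separation is exactly what makes the correspondence between $\phi^p$-assignments and propositional assignments of $\phi$ tight.

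For the forward direction, I would build $A'$ by a top-down expansion of $\phi$. Initialize $A' := \{\phi\}$ and, for each newly added formula $\psi$, process $\psi$ as follows: if $\psi \in \u{\phi}$, stop; if $\psi = \psi_1 \wedge \psi_2$, add both children; if $\psi = \psi_1 \vee \psi_2$, add one child made true by $A$. The XNF assumption guarantees that the recursion never encounters a top-level Until or Release, so only the Boolean cases arise. Finally, I would extend $A'$ by enough literals to make it a propositional assignment in the sense of Definition~\ref{def:propsat}, choosing them consistently with $A$. By construction, every atom of $\u{\phi}$ placed in $A'$ is one that $A$ makes true, so $A' \cap \u{\phi} \subseteq A$, and $A' \models_p \phi$ is immediate from the expansion steps.

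For the reverse direction, I would set $A := (A' \cap \u{\phi}) \cup B$, where $B$ assigns truth values to any remaining atoms of $\u{\phi}$ arbitrarily. I would then show by induction on the propositional structure of each subformula $\psi$ of $\phi$ that $\psi \in A'$ implies $A \models \psi^p$: the base cases ($\psi \in \u{\phi}$) hold since $\psi \in A' \cap \u{\phi} \subseteq A$, and the inductive cases for $\wedge$ and $\vee$ follow directly from the corresponding clauses of Definition~\ref{def:propsat}. Because $\phi \in A'$, we conclude $A \models \phi^p$.

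The main obstacle will be the bookkeeping of literals: a propositional assignment per Definition~\ref{def:propsat} must decide every literal in $L$, whereas a satisfying assignment of $\phi^p$ only needs to fix the atoms in $\u{\phi}$; some care is required so that the respective completions are mutually consistent and the containment $A' \cap \u{\phi} \subseteq A$ survives on the literal part. Once this is handled, invoking the XNF assumption at each recursive step — which keeps all Until and Release subformulas safely buried inside Next atoms, so that Definition~\ref{def:propsat}'s obligations for $U$ and $R$ never have to be discharged at the top level — reduces the rest of the argument to a routine propositional induction.
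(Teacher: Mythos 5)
Your proof is correct and follows essentially the same route as the paper's: both directions set up a direct correspondence between satisfying assignments of $\phi^p$ and propositional assignments of $\phi$, with the XNF hypothesis ensuring that $\u{\phi}$ contains only literals and Next formulas, so that only the Boolean clauses of Definition~\ref{def:propsat} ever need to be discharged. The one (harmless for the statement as written) difference is in the forward direction: the paper takes $A'$ to be all of $\{\psi\in cl(\phi) : A\models(\f{\psi})^p\}$, which also populates the Until/Release members of the closure buried under Next operators (the form actually used later when states of $T_\phi$ must record which Until formulas they contain), whereas your top-down expansion yields a smaller $A'$ that stops at the atoms of $\u{\phi}$; both choices are propositional assignments satisfying $\phi$ with $A'\cap\u{\phi}\subseteq A$.
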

\begin{proof}
($\Rightarrow$) 
Let $A$ be a satisfying assignment of $\phi^p$. Then let $A'$ 
be the set of all formulas $\psi\in cl(\phi)$ such that $A$ satisfies $(\f{\psi})^p$.
We clearly have that $A'\cap\u{\phi}\subseteq A$. 
According to Definition \ref{def:propsat} and because $\phi$ is in XNF, 
we have that $A'$ is a propositional assignment of $\phi$ that satisfies $\phi$.
  
($\Leftarrow$) 
Let $A'$ be a propositional assignment of $\phi$ that satisfies $\phi$.
Then let $A$ to be the assignment that assign true to $\psi\in cl(\phi)$
precisely when $\psi\in A'$. Again, we clearly have that, $A'\cap\u{\phi}\subseteq A$. 
According to Definition \ref{def:propsat} and because $\phi$ is in XNF, 
we have that $A$ is a satisfying assignment of $\phi^p$.\qed
\end{proof}

Theorem \ref{thm:assign} shows that by requiring the formula $\phi$ to be in XNF, we can construct 
the states of the transition system $T_{\phi}$ via computing satisfying assignments of $\phi^p$ over $\u{\phi}$. 
Let $t$ be a satisfying assignment of $\phi^p$ and $A_t$ be the related propositional assignment of $\phi$ 
generated from $t$ by Theorem \ref{thm:assign}, the construction is operated as follows: 
\begin{enumerate}
\item 
Let $S_0=\{A_t\ |\ t\models\phi^p\}$; and let $S:=S_0$,
\item 
Compute $S_i=\{A_t\ |\ t\models(\f{\bigwedge X(s_i)})^p\}$ for each $s_i\in S$, 
where $X(s_i)=\{\theta\ |\ (X\theta)\in s_i\}$; and update $S:=S\cup S_i$; 
\item 
Stop if $S$ does not change; else go back to step 2.
\end{enumerate}
The construction first generates initial states (step 1), and then all reachable states from initial ones (step 2);
it terminates once no new reachable state can be generated (step 3). 
So $S$ is the set of system states and its size is bounded by $2^{|cl(\phi)|}$.

Our goal here is to show that we can construct the transition system $T_\phi$ by means of SAT solving.
This requires us to refine Theorem~\ref{thm:reasoning}. A key issue in how a propositional assignment 
handles an Until formula is whether it satisfies it immediately or postpones it. We introduce
new propositions that indicate which is the case, and we refine the implementation of $\f$.
Given $\psi=(\psi_1 U \psi_2)$, we introduce a new proposition $v(\psi)$, and use the following conversion rule:
$\f{\psi}\equiv (v(\psi)\wedge\psi_2)\vee ((\neg v(\psi))\wedge \psi_1\wedge (X(\psi)))$. 
Thus, $v(\psi)$ is required to be true when the Until is satisfied immediately, and false when
the Until is postponed. Now we can state the refinement of Theorem~\ref{thm:reasoning}.

\begin{theorem}\label{thm:satreasoning}
For an LTL formula $\phi$, $\phi$ is satisfiable iff 
there is a \emph{finite} run $r=s_0,s_1,\ldots,s_n$ in $T_\phi$ such that
\begin{enumerate}
\item 
There are $0\leq m \leq n$ such that $s_m=s_n$;
\item 
Let $Q=\bigcup_{i= m}^n s_i$. If $\psi=(\psi_1U\psi_2)\in Q$, then $v(\psi)\in Q$.
\end{enumerate}
\end{theorem}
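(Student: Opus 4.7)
The plan is to establish both directions by translating between the finite lasso in the statement and the infinite-run characterization of Theorem~\ref{thm:reasoning}.

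For $(\Leftarrow)$, given a finite run $r = s_0, \ldots, s_n$ with $s_m = s_n$ satisfying condition~(2), I would unroll $r$ into the infinite run $r^\infty = s_0, \ldots, s_{m-1}, (s_m, s_{m+1}, \ldots, s_{n-1})^\omega$. The wrap-around transition $T(s_{n-1}, s_m)$ is valid because $s_m = s_n$, so $r^\infty$ is a genuine run of $T_\phi$, and by Theorem~\ref{thm:reasoning} it suffices to argue that no Until is stuck in $r^\infty$. For an Until $\psi$ first appearing at a prefix position $i < m$, the propositional-assignment rule from Definition~\ref{def:propsat} either discharges $\psi$ within the prefix via immediate satisfaction or propagates $\psi$ forward until it reaches $s_m$, reducing to the loop case. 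For $\psi$ appearing anywhere in the loop, $\psi \in Q$, so condition~(2) supplies some $j \in [m,n]$ with $v(\psi) \in s_j$; under the refined conversion $\f{\psi} \equiv (v(\psi) \wedge \psi_2) \vee (\neg v(\psi) \wedge \psi_1 \wedge X\psi)$ this forces $\psi_2 \in s_j$, i.e., immediate satisfaction at $s_j$. Since $s_j$ recurs in every iteration of the loop in $r^\infty$, $\psi$ is satisfied immediately past any given position and is therefore not stuck.

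For $(\Rightarrow)$, obtain via Theorem~\ref{thm:reasoning} an infinite run $r = s_0, s_1, \ldots$ of $T_\phi$ in which no Until is stuck. Because $|S| \leq 2^{|cl(\phi)|}$ is finite, some state $s^*$ recurs infinitely often; enumerate its visits as $m_1 < m_2 < \cdots$. I would choose $m_1$ past the last occurrence in $r$ of every Until subformula that appears only finitely often, which is well-defined because $cl(\phi)$, and hence the set of Until subformulas, is finite. Let $\Psi^*$ be the finite set of Untils appearing in some $s_i$ with $i \geq m_1$; each $\psi \in \Psi^*$ then occurs infinitely often, and by the no-stuck hypothesis each such occurrence forces a later immediate satisfaction, so $v(\psi) \in s_k$ for infinitely many $k \geq m_1$. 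Pick a witness $k_\psi \geq m_1$ for each $\psi \in \Psi^*$ and take $n := m_j$ with $m_j > \max_{\psi \in \Psi^*} k_\psi$. The resulting lasso $s_0, \ldots, s_n$ satisfies~(1) via $s_{m_1} = s_n = s^*$, and satisfies~(2) because every Until appearing in $Q = \bigcup_{i=m_1}^{n} s_i$ belongs to $\Psi^*$ and is witnessed by $v(\psi) \in s_{k_\psi} \subseteq Q$.

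The main obstacle is the forward-direction bookkeeping: ensuring that the chosen threshold $m_1$ isolates exactly the Untils that recur, so that nothing appearing in the eventual loop slips past the witness set $\Psi^*$. This is resolved by combining the finiteness of Until subformulas in $cl(\phi)$ with the no-stuck property of $r$, which together guarantee that any Until still present past $m_1$ is satisfied immediately infinitely often beyond $m_1$, providing the $v(\psi)$ witnesses needed for the construction to terminate cleanly.
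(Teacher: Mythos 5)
Your proposal is correct and follows essentially the same route as the paper: unroll the lasso into an infinite run and invoke Theorem~\ref{thm:reasoning} for one direction, and use finiteness of the state space to extract a repeating state for the other. In fact your forward direction is more careful than the paper's, which simply takes the first repetition $s_m=s_n$ and asserts condition~2 holds; your choice of the cut point $m_1$ past the finitely-occurring Untils and of $n$ beyond all the $v(\psi)$ witnesses is exactly the bookkeeping needed to make that assertion airtight.
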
  

\begin{proof}
Suppose first that items 1 and 2 hold. Then the infinite sequence 
$r'=s_0,\ldots,s_m,(s_{m+1},\ldots,s_n)^\omega$ is an infinite run of $T_\phi$.
It follows from Item 2 that no Until subformula is stuck at $r'$. By Theorem \ref{thm:reasoning},
we have that $r'\models\phi$.

Suppose now that $\phi$ is satisfiable. By Theorem \ref{thm:reasoning}, there is an infinite run
$r'$ of $T_\phi$ in which no Until subformula is stuck. Let $r'=s_0,s_1,\ldots$  be such a run. 
Each $s_i (i\geq 0)$ is a state of $T_\phi$, and the number of states is bounded by $2^{|cl(\phi)|}$. 
Thus, there must be $0\leq m< n$ such that $s_m=s_n$. Let $Q=\bigcup_{i= m}^n s_i$. Since no Until 
subformula can be stuck at $r$, if $\psi=\psi_1U\psi_2\in Q$, then it is must be that $v(\psi)\in Q$. \qed
\end{proof}

The significance of Theorem~\ref{thm:satreasoning} is that it reduces LTL
satisfiability checking to searching for a ``lasso'' in $T_\phi$ \cite{CVWY92}.
Item 1 says that we need to search for a prefix followed by a cycle, while
Item 2 provides a way to test that no Until subformla gets stuck in the infinite 
run in which the cycle $s_{m+1},\ldots,s_n$ is repeated infinitely often.

\subsection{Related Work}
We introduced our SAT-based reasoning approach above, and in this section we discuss the difference 
between our SAT-based approach and earlier works.

Earlier approach to transition-system construction for LTL formulas, 
based on tableau \cite{GPVW95} and normal form \cite{LZPVH13}, 
generates the system states explicitly or implicitly via a translation to \emph{disjunctive normal form} (DNF).
In  \cite{LZPVH13}, the conversion to DNF is explicit (though various heuristics are used to temper the exponential blow-up)
and the states generated correspond to the disjuncts.  In tableau-based tools, cf., \cite{GPVW95,DGV99}, the construction 
is based on iterative \emph{syntactic splitting} in which a state of the form $A\cup \{\theta_1\vee\theta_2\}$ 
is split to states: $A\cup \{\theta_1\}$ and $A\cup \{\theta_2\}$.

The approach proposed here is based on SAT solving, where the states correspond to satisfying assignments.
Satisfying assignments are generated via a search process that is guided by \emph{semantic splitting}. The
advantage of using SAT solving rather than syntactic approaches is the impressive progress in the development 
of heuristics that have evolved to yield highly efficient SAT solving: unit propagation, two-literal watching,
back jumping, clause learning, and more, see \cite{MZ09}. Furthermore, SAT solving continues to evolve in an impressive 
pace, driven by an annual competition%
\footnote{See \url{http://www.satcompetition.org/}}.
It should be remarked that an analogous debate, between syntactic and semantic approaches, took place
in the context of automated test-pattern generation for circuit designs, where, ultimately,
the semantic approach has been shown to be superior \cite{Larra92}.

Furthermore, relying on SAT solving as the underlying reasoning technology enables us to decouple
temporal reasoning from propositional reasoning. Temporal reasoning is accomplished via a search in the 
transition system, while the construction of the transition system, which requires proposition
reasoning using SAT solving. 

\section{LTL Satisfiability Checking}\label{sec:ltlsat}
Given an LTL formula $\phi$, the satisfiability problem is to ask whether there 
is an infinite trace $\xi$ such that $\xi\models\phi$.  In the previous section
we introduced a SAT-based LTL-reasoning framework and showed how it can be applied
to solve LTL reasoning problems. In this section we use this framework
to develop an efficient SAT-based algorithm for LTL satisfiability checking. 
We design a depth-first-search (DFS) algorithm that constructs the temporal 
transition system on the fly and searches for a trace per Theorem~\ref{thm:satreasoning}.
Furthermore, we propose several heuristics to reduce the search space.
Due to the limited space, we offer here a high-level description of the algorithms.
Details are provided in Appendix \ref{app:alg}.

\subsection{The Main Algorithm}
The main algorithm, {\sf LTL-CHECK}, creates the temporal transition system of the input formula 
on-the-fly, and searches for a lasso in a DFS  manner. Several prior works describe algorithms
for DFS lasso search , cf.~\cite{CVWY92,LZPVH13,SE05}. Here we focus on the steps that are 
specialized to our algorithm.

The key idea of LTL-CHECK is to create states and their successors using SAT techniques 
rather than traditional tableau or expansion techniques. Given the current formula $\phi$, 
we first compute its XNF version $\f{\phi}$, and then use a SAT solver to compute 
the satisfying assignments of $(\f{\phi})^p$. Let $P$ be a satisfying assignment for $(\f{\phi})^p$;
from the previous section we know that $X(P)=\{\theta\ |\ X\theta \in P\}$ yields a successor state in $T_\phi$. 
We implement this approach in the $getState$ function, which we improve later by introducing some heuristics. 
By enumerating all assignments of $(\f{\phi})^p$ we can obtain all successor states of $P$. Note, however 
that LTL-CHECK runs in the DFS manner, under which only a single state is needed at a time, so additional 
effort must be taken to maintain history information of the next-state generation for each state $P$.    

As soon as LTL-CHECK detects a lasso, it checks whether the lasso is accepting. Previous lasso-search
algorithms operate on the B\"uchi automaton generated from the input formula. In contrast, here
we focus directly on the satisfaction of Until subformulas per Theorem \ref{thm:satreasoning}. 
We use the example below to show the general idea. 

Consider the formula $\phi=G((F b)\wedge (F c))$.  
By Theorem \ref{thm:xnf}, $\f{\phi}=\f{Fb}\wedge \f{Fc}\wedge X\phi$, 
where $\f{Fb}=((b\wedge v(Fb))\vee (\neg v(Fb)\wedge X(Fb)))$ and 
$\f{Fc}=((c\wedge v(Fc))\vee (\neg v(Fc)\wedge X(Fc)))$. Suppose we get from the SAT solver
an assignment of $(\f{\phi})^p$ $P=\{v(Fb),\neg v(Fc), b, \neg c, \neg X(Fb), X(Fc), X\phi\}$. 
By Theorem \ref{thm:assign}, we create a satisfying assignment $A'$ that includes
all formulas in $cl(\phi)$ that are satisfied by $P$, and we get the state
$s_0= P \cup \{\phi,Fb,Fc,(Fb)\wedge (Fc)\}$. To obtain the next state, we start
with $X(s_0)=\{Fc,\phi\}$, compute $\f{Fc\wedge\phi}$ and repeat the process.
After several steps LTL-CHECK may find a path $s_0\tran{}s_1\tran{}s_0$, 
where $s_1=\{\phi, Fb, Fc, (Fb)\wedge (Fc), \neg v(Fb), v(Fc), \neg b, c, X(F b), \neg X(Fc), X\phi\}$.
Now $s_0$ and $s_1$ form a lasso. Let $Q=s_0\cup s_1$. Both $Fb$ and $Fc$ are in $Q$,
and also $v(Fb)$ and $v(Fc)$ are in $Q$. By Theorem \ref{thm:satreasoning}, $\phi$ is satisfiable.

\subsection{Heuristics for State Elimination}
While LTL-CHECK uses an efficient SAT solver to compute states of the system in the $getState$ function, 
this approach is effective in creating states and their successors, but cannot be used to guide
the overall search. To find a satisfying lasso faster, we add heuristics that drive the search
towards satisfaction.  The key to these heuristics is smartly choosing the next state given by SAT solvers. 
This can be achieved by adding more constraints to the SAT solver. Experiments show these
heuristics are critical to the performance of our LTL-satisfiability tool.

The construction of state in the transition system always starts with formulas.
At the beginning, we have the input formula $\phi_0$ and we take the following steps:
(1) Compute $\f{\phi_0}$; (2) Call a SAT solver to get an assignment $P_0$ of $(\f{\phi_0})^p$; 
and (3) Derive a state $P'_0$ from $P_0$. Then, to get a successor state, we start with the formula
$\phi_1=\bigwedge X(P'_0)$, and repeat steps (1-3). Thus, every state $s$ is obtained 
from some formula $\phi_s$, which we call the \emph{representative formula}. 
Note that with the possible exception of $\phi_0$, all representative formulas are conjunctions.
Let $\phi_s=\bigwedge_{1\leq i\leq n}\theta_i$ be the representative formula of a state $s$;
we say that $\theta_i (1\leq i\leq n)$ is an \emph{obligation} of $\phi$ if $\theta_i$ is an Until formula. 
Thus, we associate with the state $s$ a set of obligations, which are the Until conjunctive elements of $\phi_s$.
(The initial state may have obligations if it is a conjunction.)
The approach we now describe is to satisfy {obligations as early as possible during the search, 
so that a satisfying lasso is obtained earlier.  We now refine the $getState$ function, 
and introduce three heuristics via examples.

\begin{figure}[h!]
\centering
\includegraphics[scale=1]{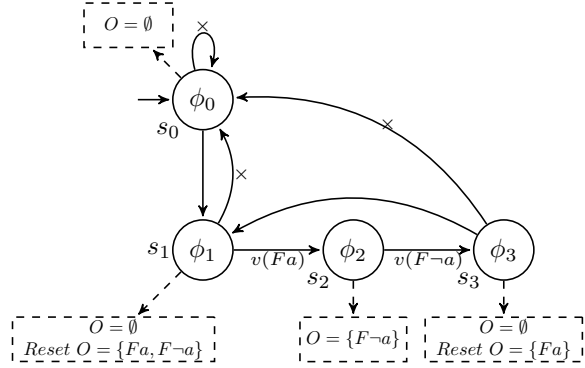}
\caption{A satisfiable formula. 
In the figure $\phi_0=G((Fa)\wedge (F\neg a))$, 
$\phi_1=((Fa) \wedge (F\neg a)\wedge \phi_0)$, 
$\phi_2= ((F\neg a)\wedge\phi_0)$ and 
$\phi_3=((Fa)\wedge\phi_0)$. 
These representative formulas correspond to states $s_0,s_1,s_2,s_3$, respectively.}
\label{fig:sat_example}
\end{figure}

The $getState$ function keeps a global \emph{obligation set}, collecting all obligations 
so far not satisfied in the search. The obligation set is initialized with the obligations 
of the initial formula $\phi_0$. When an obligation $o$ is satisfied (i.e., when $v(o)$ is true),
$o$ is removed from the obligation set.  Once the obligation set becomes empty in the search,
it is reset to contain obligations of current representative formula $\phi_i$. 
In Fig. \ref{fig:sat_example}, we denote the obligation set by $O$.
$O$ is initialized to $\emptyset$, as there is no obligation in $\phi_0$. 
$O$ is then reset in the states $s_1$ and $s_3$, when it becomes empty.

The $getState$ function runs in the \textbf{\ue} mode by default, in which it obtains 
the next state guided by the obligations of current state. For satisfiable formulas, this 
leads to faster lasso detection.  Consider formula $\phi=G((F a) \wedge (F\neg a))$. 
Parts of the temporal transition system $T_\phi$ are shown in Fig. \ref{fig:sat_example}. 
In the figure, $O$ is reset to $\{(Fa), (F\neg a)\}$ in state $s_1$, as these are the obligations 
of $\phi_1$. To drive the search towards early satisfaction of obligations, we obtain
a successor of $s_1$, by applying the SAT solver to the formula $(\f{\phi_1}\wedge (v(Fa)\vee v(F\neg a)))^p$, 
to check whether $Fa$ or $F\neg a$ can be satisfied immediately.  
If the returned assignment satisfies $v(Fa)$, then we get the success state $s_2$ with the
representative formulas $\phi_2$, and $(Fa)$ is removed from $O$. Then the next state is $s_3$
with the representative formula $\phi_3$, which removes the obligation $(F\neg a)$. 
since $O$ becomes empty, it is reset to the obligations $\{Fa\}$ of $\phi_3$.
Note that in Fig. \ref{fig:sat_example}, there should be transitions from $s_2$ to $s_1$ 
and from $s_3$ to $s_2$,  but they are never traversed under the \ue mode. 

The $getState$ function runs in the \textbf{\sp} mode when the obligation set 
becomes empty. In this mode, we want to check whether the next state can be
a state that have been visited before and after that visit the obligation set 
has become empty.  In this case, the generated lasso is accepting, by Theorem~\ref{thm:satreasoning}.
In Fig. \ref{fig:sat_example}, the obligation set $O$ becomes empty in state $s_3$.
Previously, it has become empty in $s_1$. Normally, we find a success state for $s_3$ 
by applying the SAT solver to $(\f{\phi_3})^p$.  To find out if either $s_0$ or $s_1$ 
can be a successor of $s_3$, we apply the SAT solver to the formula 
$(\f{\phi_3}\wedge (X(\phi_0)\vee X(\phi_1)))^p$. Since this formula is satisfiable and
indicates a transition from $s_3$ to $s_1$ ($X\phi_1$ can be assigned true in the assignment), 
we have found that $trace(s_0),(trace(s_1),trace(s_2),trace(s_3))^\omega$ satisfies $\phi$.
In the figure, the transitions labeled \textbf{x} represent failed attempts to generate 
the lasso when $O$ becomes empty. Although failed attempts have a computational cost, 
trying to close cycles aggressively does pay off.


The $getState$ function runs in the \textbf{\up} mode if all formulas in the obligation 
set are postponed in the \ue mode. The goal of this mode is to eliminate ``conflicts'' 
that block immediate satisfaction of obligations.
To achieve this, we use a \textit{conflict-guided} strategy. 
Consider, for example, the formula $\phi_0=a\wedge (Xb)\wedge F((\neg a)\wedge (\neg b))$.
Here the formula $\psi=F((\neg a)\wedge (\neg b))$ is an obligation.  We check whether $\psi$ 
can be satisfied immediately, but it fails.  The reason for this failure is the conjunct $a$ in $\phi$, 
which conflicts with the obligation $\psi$. We identify this conflict using 
a \textit{minimal unsat core} algorithm \cite{MSL11}. To eliminate this conflict, we add the conjunct
$\neg Xa$ to $\phi$, hoping to be able to satisfy the obligation immediately in the next state.
When we apply the SAT solver to $(\f{\phi}\wedge (\neg Xa))^p$, we obtain a successor state with the
representative formula $\phi_1=(b\wedge\psi)$, again with $\psi$ as an obligation.
When we try to satisfy $\psi$ immediately, we fail again, since $\psi$ conflicts with $b$.
To block both conflicts, we add $\neg Xb$ as an additional constraint, and apply the SAT solver to
$(\f{\phi}\wedge (\neg Xa)\wedge (\neg Xb))^p$. This yields a successor state with the representative formula
$\phi_2=\psi$. Now we are able to satisfy $\psi$ immediately, and we are able to satisfy $\phi$
with the finite path $\phi\tran{}\phi_1\tran{}\phi_2$.

As another example, consider the formula $\phi=(G(Fa) \wedge Gb\wedge F(\neg b))$.
Since $F(\neg b)$ is an obligation, we try to satisfy it immediately, but fail. The
reason for the failure is that immediate satisfaction of $F(\neg b)$ conflicts with the
conjunct $Gb$. In order to try to block this conflict, we add to $\phi$ the conjunct
$\neg XGb$, and apply the SAT solver to $(\f{\phi}\wedge\neg XGb)^p$.  This also fails.
Furthermore, by constructing a minimal unsat core, we discover that $(\f{Gb}\wedge \neg X(Gb))^p$ 
is unsatisfiable.  This indicates that $Gb$ is an ``invariant''; that is, if $Gb$ is true in a state
then it is also true in its successor. This means that the obligation $F(\neg b)$ can never be
satisfied, since the conflict can never be removed. Thus, we can conclude that $\phi$ is
unsatisfiable without constructing more than one state.

In general, identifying conflicts using minimal unsat cores enables both to find satisfying traces faster,
or conclude faster that such traces cannot be found.

\section{Experiments on LTL Satisfiability Checking}\label{sec:experiment}

In this section we discuss the experimental evaluation for LTL satisfiability checking. 
We first describe the methodology used in experiments and then show the results.

\subsection{Experimental Methodologies}
The platform used in the experiments is an IBM iDataPlex consisting of
2304 processor cores in 192 Westmere nodes (12 processor cores per node) at 2.83 GHz 
with 48 GB of RAM per node (4 GB per core), running the 64-bit Redhat 7 operating system.
In our experiments, each tool runs on a single core in a single node. 
We use the Linux command ``time'' to evaluate the time cost (in seconds) of each experiment. 
Timeout was set to 60 seconds, and the out-of-time cases are set to cost 60s.  

We implemented the satisfiability-checking algorithms introduced in this paper, 
and named the tool \tool%
\footnote{It can be downloaded at \url{www.lab205.org/aalta}.}. 
We compare \tool with Aalta\_v1.2, 
which is the latest explicit LTL-satisfiability solver (though it does use
some SAT solving for acceleration) \cite{LPZHVCoRR14}. (The SAT engine used in both 
Aalta\_v1.2 and \tool is Minisat \cite{ES03}.) In the literature, Aalta\_v1.2 is shown to 
outperform other existing explicit LTL solvers, 
so we omit the comparison with these solvers in this paper. 
Two resolution-based LTL satisfiability solvers, TRP++ \cite{HK03} and ls4 \cite{SW12},
are also included in our comparison. (Note ls4 utilizes SAT solving as well.)

As shown in \cite{RV10}, LTL satisfiability checking can be reduced to model checking.
While BDD-based model checker were shown to be competitive for LTL satisfiability solving
in \cite{RV10}, they were shown later not to be competitive with specialized tools, such as Aalta\_v1.2 \cite{LZPVH13}.
We do, however, include in our comparison the model checker NuXmv \cite{CCDGMMMRT14}, 
which integrates the latest SAT-based model checking techniques. It uses Minisat as the SAT engine as well.  Although standard bounded model checking (BMC) 
is not complete for the LTL satisfiability checking, there are techniques to make it complete, 
for example, \emph{incremental bounded model checking} (BMC-INC) \cite{KHL05}, which is implemented in NuXmv.
In addition, NuXmv implements also new SAT-based techniques, IC3 \cite{Bra11}, which 
can handle liveness properties with the K-liveness technique\cite{CS12}. We included
IC3 with K-liveness in our comparison.

To compare with the K-liveness checking algorithm, we ran NuXmv using the command ``check\_ltlspec\_klive -d''. 
For the BMC-INC comparison, we run NuXmv with the command ``check\_ltlspec\_sbmc\_inc -c''. 
\tool , Aalta\_v1.2 and ls4 tools were run using their default parameters, while TRP++ runs with ``-sBFS -FSR''. 
Since the input of TRP++ and ls4 must be in SNF (Separated Normal Form \cite{Fish97}), 
an SNF generator is required for running these tools. We use the generator \textit{TST-translate} which belongs to 
ls4 tool suit. 

In the experiments we consider the benchmark suite from \cite{SD11}, referred to as \textit{schuppan-collected}. 
This suite collects formulas from several prior works, including \cite{RV10},
and has a total of 7446 formulas (3723 representative formulas and their negations). 
(Testing also the negation of each formula is in essence a check for validity.)
In our experiments, we did not find any inconsistency among the solvers that did not time out.

\subsection{Results}

\begin{table*}[htb]
\centering
    \caption{Experimental results on the Schuppan-collected benchmark. 
     Each cell lists a tuple $\langle t, n\rangle$ where $t$ is the total checking time (in seconds), 
     and $n$ is the total number of unsolved formulas.}\label{tab:result2}
    \scalebox{0.7}
    {
    \begin{tabular}{|l|l|r||l|r||l|r||l|r||l|r||l|r||l|r||}
    \hline
    Formula type  & \multicolumn{2}{c|}{ls4} & \multicolumn{2}{c|}{TRP++} & \multicolumn{2}{c|}{\pbox{2cm}{NuXmv-\\BMCINC}} &  \multicolumn{2}{c|}{Aalta\_v1.2} & \multicolumn{2}{c|}{\pbox{2cm}{NuXmv-\\IC3-Klive}} & \multicolumn{2}{c|}{\pbox{3cm}{\tool \\without heuristics}} & \multicolumn{2}{c|}{\pbox{2.5cm}{\tool \\with heuristics}}\\
    \hline
    /acacia/example  &  155 & 0 &  192 & 0 &  1 & 0 & 1 & 0 &  8 & 0 & 1 & 0 &  1 & 0\\
    \hline
    /acacia/demo-v3  &  68  & 0 &  2834  & 38 & 3  & 0 & 660 & 0 & 30  & 0 & 630 & 0 & 3 & 0\\
    \hline
    /acacia/demo-v22  &  60  & 0 &  67  & 0 &  1 & 0 &  2 & 0 & 4 & 0 &  2 & 0 &  1 & 0\\
    \hline
    /alaska/lift  &  2381 & 27 &  15602 & 254 &  1919 & 26 & 4084  & 63 &  867 & 5 & 4610  & 70 &  1431 & 18\\
    \hline  
    /alaska/szymanski  &  27 & 0 &  283 & 4 &  1 & 0 & 1  & 0 &  2 & 0 & 1  & 0 &  1 & 0\\
    \hline
    /anzu/amba  &  5820 & 92 &  6120 & 102 &  536 & 7 &  2686 & 40 & 1062 & 8 &  3876 & 60 &  928 & 4\\
    \hline 
    /anzu/genbuf  &  2200 & 30 &  7200 & 120 &  782 & 11 &  3343 & 54 & 1350 & 13 &  5243 & 94 &  827 & 4\\
    \hline
    /rozier/counter  &  3934 & 62 &  4491 & 44 &  3865 & 64 &  3928 & 60 & 3988 & 65 &  3328 & 55 &  2649 & 40\\
    \hline  
    /rozier/formulas  &  167 & 0 &  37533 & 523 &  1258 & 19 &  1372 & 20 & 664 & 0 &  1672 & 25 &  363 & 0\\
    \hline  
    /rozier/pattern  &  2216 & 38 &  15450 & 237 &  1505 & 8 & 8  & 0 &  3252 & 17 & 8  & 0 &  9 & 0\\
    \hline  
    /schuppan/O1formula  &  2193 & 34 &  2178 & 35 &  14 & 0 & 2 & 0 & 95  & 0 & 2 & 0 & 2 & 0\\
    \hline  
    /schuppan/O2formula  &  2284 & 35 &  2566 & 41 &  1781 & 28 &  2 & 0 & 742 & 7 &  2 & 0 &  2 & 0\\
    \hline  
    /schuppan/phltl  &  1771 & 27 &  1793 & 29 &  1058 & 15 &  1233 & 21 & 753 & 11 &  1333 & 21 &  767 & 13\\
    \hline  
    /trp/N5x  &  144 & 0 &  46 & 0 &  567 & 9 &  309 & 0 & 187 & 0 &  219 & 0 &  15 & 0\\
    \hline 
    /trp/N5y  &  448 & 10 &  95 & 1 &  2768 & 46 & 116 & 0 & 102 & 0 & 316 & 0 &  16 & 0\\
    \hline  
    /trp/N12x  &  3345 & 52 &  45739 & 735 &  3570 & 58 &  768 & 48 & 705 & 0 &  768 & 0 &  175 & 0\\
    \hline  
    /trp/N12y  & 3811 & 56 &  19142 & 265 &  4049 & 67 &  7413 & 110 & 979 & 0 &  7413 & 100 &  154 & 0\\
    \hline  
    /forobots  &  990 & 0 &  1303 & 0 &  1085 & 18 &  2280 & 32 & 37 & 0 &  2130 & 30 &  524 & 0\\
    \hline 
    Total & 32014 & 463 & 163142 & 2428 & 24769 & 376 & 31208 & 450 & 14261 & 126 & 31554 & 455 & 7868 & 79\\
    \hline
\end{tabular}
}
\end{table*}

The experimental results are shown in Table \ref{tab:result2}. 
In the table, the first column lists the different benchmarks in the suite, 
and the second to eighth columns display the results from different solvers. 
Each result in a cell of the table is a tuple $\langle t, n\rangle$, 
where $t$ is the total checking time for the corresponding benchmark, 
and $n$ is the number of unsolved formulas due to timeout in the benchmark. 
Specially the number ``0'' in the table means all formulas in the given benchmark are solved. 
Finally, the last row of the table lists the total checking time and number 
of unsolved formulas for each solver. 

The results show that while the tableau-based tool Aalta\_v1.2, outperforms ls4 and TRP++,
it is outperformed by NuXmv-BMCINC and NuXmv-IC3-Klive, both of which are outperformed
by \tool, which is faster by about 6,000 seconds and solves 47 more instances than
NuXmv-IC3-Klive. 

Our framework is explicit and closest to that is underlaid behind Aalta\_v1.2. From the results, 
\tool with heuristic outperforms Aalta\_v1.2 dramatically, faster by more than 23,000 seconds 
and solving 371 more instances. One reason is, when Aalta\_v1.2 fails it is often due to
timeout during the heavy-duty normal-form generation, which \tool simply avoids
(generating XNF is rather lightweight). 

Generating the states in a lightweight way, however, is not efficient enough. By running \tool without 
heuristics, it cannot perform better than Aalta\_v1.2, see the data in column 5 and 7 of Table \ref{tab:result2}. 
It can even be worse in some 
benchmarks such as ``/anzu/amba'' and ``anzu/genbuf''. 
We can explain the reason via an example. 
Assume the formula is $\phi_1\vee\phi_2$, the traditional tableau method splits the formula and at most creates two nodes. 
Under our pure SAT-reasoning framework, however,it may create three nodes which contain $\phi_1\wedge\neg\phi_1$ or 
$\neg\phi_1\wedge\phi_2$, or $\phi_1\wedge\phi_2$. 
This indicates that the state space generated by SAT solvers may in general be larger than that generated 
by tableau expansion. 

To overcome this challenge, we propose some heuristics by adding specific constraints to SAT solvers, which at the mean 
time succeeds to reduce the searching space of the overall system.   
The results shown in column 8 of Table \ref{tab:result2} demonstrate the effectiveness of heuristics presented in the paper. 
For example, the ``/trp/N12/'' and ``/forobots/'' benchmarks are mostly unsatisfiable 
formulas, which Aalta\_v1.2 and \tool with heuristic do not handle well. Yet the \textit{unsat-core extraction} heuristic, 
which is described in the \up mode of $getState$ function, enables \tool with heuristic to solve 
all these formulas. For satisfiable formulas, the results from ``/anzu/amba/'' and ``/anzu/\\genbuf'' formulas, 
which are satisfiable, show the efficiency of the \ue and \sp heuristics in the $getState$ function,
which are necessary to solve the formulas. 

In summary, 
\tool with heuristic performed best on satisfiable formulas, solving 6750 instances, followed in order by NuXmv-BMCIMC (6714), 
NuXmv-IC3-Klive (6700), Aalta\_v1.2 (6689), ls4 (6648), and TRP++ (4711). For unsatisfiable formulas, 
NuXmv-IC3-Klive performs best, solving 620 instances, followed in order by \tool with heuristic (617), 
NuXmv-BMCINC (356), ls4 (335), Aalta\_v1.2 (309), and TRP++ (307). Detailed statistics are in Appendix \ref{app:detail_results}. 

Note that NuXmv-IC3-Klive is able to solve more cases than \tool with heuristic in some benchmarks, 
such as ``/lift'' and ``/schuppan/phltl'' in which unsatisfiable formulas are not handled 
well enough by \tool. Currently, \tool requires  large number of SAT calls to
identify an unsatisfiable core. In future work we plan to use a specialized MUS (minimal unsatisfable core)
solver to address this challenge.

\section{Concluding Remarks}\label{sec:discussion}

We described in this paper a SAT-based framework for explicit LTL reasoning.
We showed one of its applicaitons to LTL-satisfiability checking, by proposing basic 
algorithms and efficient heuristics.  As proof of concept, we implemented an 
LTL satisfiability solver, whose performance dominates all similar tools. 
In Appendix \ref{app:smt} we demonstrate that our approach can be extended from propositional 
LTL to assertional LTL, yielding exponential improvement in performance.

Extending the explicit SAT-based approach to other applications of LTL reasoning,
is a promising research direction. For example, the standard approach in LTL model 
checking \cite{VW86b} relies on the translation of LTL formulas to B\"uchi automata.
The transition systems $T_\phi$ that is used for LTL satisfiability checking
can also be used in the translation from LTL to B\"uchi automata.
Current best-of-breed translators, e.g., \cite{DP04,GPVW95,DGV99,SB00}
are tableau-based, and the SAT approach may yield significant performance improvement.

Of course, the ultimate temporal-reasoning task is model checking. Explicit model
checkers such as SPIN \cite{Hol03} start with a translation of LTL to B\"uchi 
automata, which are then used by the model-checking algorithm. An alternative
approach is to construct the automaton on-the-fly using SAT techniques,
using the framework developed here.  Current symbolic model-checking tools, such as NuXmv, 
do rely heavily on SAT solvers to implement algorithms such as BMC \cite{KHL05} or IC3 \cite{Bra11}.
The success of the SAT-based explicit LTL-reasoning approach for LTL satisfiability
checking suggests that this approach may also be successful in SAT-based model checking.
This remains a highly intriguing research possibility.

\bibliographystyle{plain}
\bibliography{ok,cav}

\appendix

\section{Proof of Theorem \ref{thm:propassign}}
\begin{proof}
If $A$ propositionally satisfies $\phi$ and $\xi\models \bigwedge A$, then $\xi\models\phi$, as $\phi\in A$.

For the other direction, assume that $\xi\models\phi$. Let $A=\{\theta\in cl(\phi): \xi\models\theta\}$.
Clearly, $\phi\in A$. It remains to prove that $A$ is a propositional assignment,
which we show by structural induction. 
\begin{itemize}
\item 
For $\ell\in L$ either $\xi\models\ell$ or $\xi\not\models\ell$, so
either $\ell\in A$ or $(\neg\ell)\in A$.
\item 
If $\xi\models (\theta_1\wedge\theta_2)$, then $\xi\models\theta_1$ and $\xi\models\theta_2$,
so both $\theta_1\in A$ and $\theta_2\in A$.
\item 
If $\xi\models (\theta_1\vee\theta_2)$, then $\xi\models\theta_1$ or $\xi\models\theta_2$,
so either $\theta_1\in A$ or $\theta_2\in A$.
\item
If $\xi\models(\theta_1 U\theta_2)$, then either $\xi\models\theta_2$, in which case, $\theta_2\in A$,
or $\xi\models \theta_1$ and $\xi\models (X(\theta_1 U \theta_2)$, in which case $\theta_1\in A$
and $(X(\theta_1 U \theta_2))\in A$.
\item
If $\xi\models(\theta_1 R\theta_2)$, then $\xi\models\theta_2$, in which case $\theta_2\in A$, and
either $\xi\models\theta_1$ and $\xi\models (X(\theta_1 R \theta_2))$, in which case
$\theta_1\in A$ and $(X(\theta_1 R \theta_2))\in A$.
\end{itemize}\qed
\end{proof}

\section{Proof of Theorem \ref{thm:reasoning}}

\begin{proof}
for the first claim, let $r$ be $s_0,s_1,\ldots$ and $r_i=s_i,s_{i+1},\ldots(i\geq 0)$. 
Assume that no Until subformua is stuck at $r$. We prove by induction that $trace(r_i)\models\psi$ for $\psi\in s_i$.
It follows that $trace(r)\models \phi$.
\begin{itemize}
\item
Trivially, for a literal $\ell\in s_i$ we have that $trace(r_i)\models\ell$.
\item
If $(\theta_1\wedge\theta_2)\in s_i$ , then $\theta_1\in s_i$ and $\theta_2\in s_i$.
By induction, $trace(r_i)\models \theta_1$ and $trace(r_i)\models \theta_2$,
so $trace(r_i)\models (\theta_1\wedge\theta_2)$.
The argument for $(\theta_1\vee\theta_2)\in s_i$ is analogous.
\item
If $(X\theta)\in s_i$, then $\theta\in s_{i+1}$. By induction, $trace(r_{i+1})\models \theta$,
so $trace(r_i) \models (X\theta)$
\item 
If $(\theta_1 U \theta_2)\in s_i$, then $\theta_2\in s_i$ or both $\theta_1\in s_i$
and $(X(\theta_1 U \theta_2))\in s_i$, which implies that $(\theta_1 U \theta_2)\in s_{i+1}$. 
Since  $(\theta_1 U \theta_2)$ is not stuck at $r$, there is some $k\geq i$ such that
$\theta_2\in s_k$, and $\theta_i\in s_j$ for $i\leq j \leq k$. Using the induction hypothesis
and the semantics of Until, it follows that $trace(r_i)\models  (\theta_1 U \theta_2)$.
\item 
If $(\theta_1 R \theta_2)\in s_i$, then $\theta_2\in s_i$ and either $\theta_1\in s_i$
or $(X(\theta_1 R \theta_2))\in s_i$, which implies $(\theta_1 R \theta_2)\in s_{i+1}$.
It is possible here for $(\theta_1 R \theta_2)$ to be postponed forever.
So for all $k\geq i$, we have that either $\theta_2\in s_j$ or
there exists $i\leq j \leq k$ such that $\theta_i\in s_j$. Using the induction hypothesis
and the semantics of Release, it follows that $trace(r_i)\models  (\theta_1 R \theta_2)$.
\end{itemize}

It follows that if there is a run $r$ of $T_\phi$ such that no Until subformul is stuck at $r$
then $\phi$ is satisfiable.

In the other direction, assume that $\phi$ is satisfiable and there is an infinite trace 
$\xi\in L^\omega$ such that $\xi\models\phi$. Let $\xi=P_0,P_1,\ldots$, and let $\xi_i=P_i,P_{i+1},\ldots$. 
As in the proof of Theorem \ref{thm:propassign}, define $A_i=\{\theta\in cl(\phi): \xi_i\models\theta\}$.
As in the proof of Theorem \ref{thm:propassign}, each $A_i$ is a propositional assignment for $\phi$,
and, consequently a state of $T_\phi$. Furthermore, the semantics of Next implies that
we have $T(A_i,A_{i+1})$ for $i\geq 0$. Furthermore, the semantics of Until ensures that no
Until is stuck in the run $A_0,A_1,\ldots$. \qed
\end{proof}

\section{Implementation of LTL-Satisfiability Checking Algorithms}\label{app:alg}

\subsection{Main checking algorithm}
The main algorithm checks the satisfiability of the input formula on the fly. 
It implements a \emph{depth-first search} to identify the lasso described in Theorem~\ref{thm:satreasoning}.
Algorithm \ref{alg:main} shows the details of the main algorithm which is named LTL-CHECK.

\begin{algorithm}
\caption{LTL Main Checking Algorithm: LTL-CHECK}\label{alg:main}
  \begin{algorithmic}[1]      
       \REQUIRE An LTL formula $\phi$.
       \ENSURE  SAT or UNSAT.
       \IF {$\phi=\tt$ (or $\phi=\ff$)}
         \RETURN SAT (or UNSAT);
       \ENDIF
       \STATE Let $\phi=\f{\phi}$: make $\phi$ ready for SAT solver;
       \STATE CALL $getState(\phi)$: get one system state $P$ from $\phi$;
       \WHILE {$P$ is existed}
         \STATE Let $\psi = \bigwedge X(P)$ be the next state of $\phi$;
         \IF {$\psi$ is in $explored$}
           \STATE CALL $getState(\phi)$ again: get another $P$;
           \STATE Continue;
         \ENDIF
         \IF {$\psi$ is visited}
           \IF {$model(\psi)$ is true}
             \RETURN SAT;
           \ENDIF
         \ELSE
           \STATE Push $\psi$ to $visited_S$, and push $P$ to $visited_P$;
           \IF {LTL-CHECK ($\psi$) is SAT}
             \RETURN SAT;
           \ENDIF
           \STATE Pop $\psi$ from $visited_S$, and pop $P$ from $visited_P$;
         \ENDIF
         \STATE CALL $getState(\phi)$ again: get another $P$;
       \ENDWHILE
       \STATE Push $\phi$ to $explored$;
       \RETURN UNSAT;
     \end{algorithmic}
\end{algorithm}

In Algorithm \ref{alg:main}, the function $\f{\phi}$ (in Line 4) is implemented according to Theorem \ref{thm:xnf}. 
It returns the next normal form of $\phi$. The function $getState$ takes an input LTL formula $\phi$ and outputs 
another system state $\phi'$ from $\phi$. We have that $T(CF(\phi), X(CF(\phi')))$, which means that $X(CF(\phi'))$ 
is one of next states of $\phi$. As mentioned previously, these can be obtained from the assignments of $(\f{\phi})^p$. 
Another main task of $getState$ is to return a different state never returned before in every invocation. 
More details are shown in Algorithm \ref{alg:getnextpair}. 

The main algorithm maintains three global lists: $visited_S$, $visited_P$ and $explored$, which record visited state, visited assignments and explored states respectively. So, $visited_S[i+1]$ is a next state of $visited_S[i]$ ($i\geq 0$), and $visited_P[i+1]$ is an assignment of $\bigwedge X(visited_P(i))$. Note explored states are those all of the successors are visited but no satisfying model is found. So explored states are unsatisfiable formulas. The function $model$ function (in Line 13) is to check whether the cycle found (containing $\psi$) is accepting. It is evaluated according to Theorem \ref{thm:satreasoning}.

 \begin{algorithm}
 \caption{Implementation of $getState$}\label{alg:getnextpair}
    \begin{algorithmic}[1]      
    \REQUIRE an LTL formula $\phi$;
    \ENSURE  a new state of $\phi$;
       \STATE Let $\alpha$ be \\$\phi^p\wedge(\bigwedge_{\psi\in explored}\neg (X\psi)^p)^1\wedge(\bigwedge_{\psi\in history}\neg \psi^p)^2$;
       \IF {$\alpha$ is satisfiable}
         \STATE Let $P$ be an assignment of $\alpha$;
         \STATE $history = history \cup \{\bigwedge P\}$;
         \RETURN $P$;
       \ELSE
         \RETURN $null$
       \ENDIF
       
     \end{algorithmic}
 \end{algorithm}
 
At the very beginning CHECK checks whether the formula is $\tt$ or $\ff$ (Line 1-3), in which cases the satisfiability can be determined immediately. Then it computes the next normal form of input formula $\phi$ (Line 4), acquiring a state $P$ from $(\f{\phi})^p$ (Line 5). If $P$ is not existed (Line 6), i.e. $\phi$ is checked unsatisfiable after exploring all its next states, then it is pushed to $explored$ (Line 25) and LTL-CHECK returns UNSAT (Line 26). Otherwise, LTL-CHECK makes sure that the chosen new next state $\psi$ of $\phi$ is not explored (Line 8-11). Later it checks whether $\psi$ has been visited before (Line 12). If so a cycle has been found and the $model$ function is invoked to check whether a satisfying model is found as well (Line 13-15). If this fails then another $P$ is required for further checking (Line 23). If $\psi$ is not visited yet, it is pushed into $visited$ and CHECK is invoked recursively by taking $\psi$ as the new input (Line 17,18). If $\psi$ is checked to be SAT so does $\phi$, else $\psi$ is popped from $visited$ (Line 21) and CHECK selects another $P$ for further checking (Line 23). One may find the algorithm can terminate as soon as all states from $(\f{\phi})^p$ are constructed. 

The task of $getState$ is not only to return a system state for the input formula, but also to guarantee every invoke by taking $\phi$ as input it does not return the state already created. To achieve this, we introduce another set $history$ to store all states already created so far for each current formula $\phi$. Then the assignment of $\alpha$ in Algorithm \ref{alg:getnextpair} can make the assignment distinguished with those ever created before. Note in Line 1, the expression labeled 1 erases those states already explored, which are shown to be unsatisfiable. And the expression labeled 2 guarantees those assignments that already appeared before cannot be chosen again. By adding these two constraints it avoids SAT solvers to create duplicated assignments. The notation $null$ in Line 7 represents the state required is not existed.

However, simply avoidance to generate duplicated states is not efficient enough for checking on a state space that is exponential larger than the original size of input formula. In the following section we present some heuristics to guide SAT solvers to return the assignment we prefer as soon as possible.

\subsection{Guided State Generation}  

Recall our basic reasoning theorem (Theorem \ref{thm:satreasoning}), the principle we judge whether a cycle can form a satisfying model is to check the satisfaction of Until formulas in $CF(\psi)$\footnote{$CF(\psi)$ denotes the set of conjuncts of $\psi$ by taking it as an And formula.}, where $\psi$ is a state in the cycle. So an intuitive idea to speed up the checking process is to locate such a satisfying cycle as soon as possible. As the satisfying cycle keeps satisfying the satisfaction of Until formulas, we follow this way and always try to ask SAT solvers to return assignments that can satisfy some Until formulas in $CF(\psi)$ where $\psi$ is the current state.

\begin{algorithm}
\caption{Implementation of the \ue mode}\label{alg:ue}  
\begin{algorithmic}[1]  
\REQUIRE an LTL formula $\phi$ and a global set $U$; 
\ENSURE  a new state of $\phi$;
       \IF {U is $\emptyset$}
         \STATE Turn into the \sp mode.
         \STATE Reset U to be $U(\phi)$, where $\phi$ is the current state and $U(\phi)\subseteq CF(\phi)$ is the set of Until formulas.
       \ENDIF
       \STATE Let P be an assignment of \\
       $(\f{\phi})^p\wedge\bigvee V(U)\wedge\bigwedge_{\psi\in explored}\neg (X(\psi))^p$, where 
       $V(U)=\{v(u)\ |\ u\in U\}$;
       \IF {P is empty}
         \STATE Turn into the \up mode.
       \ELSE
         \STATE Update $U=U\backslash S$, where $S=\{u\ |\ v(u)\in P\textit{ and }u\in U\}$;
         \RETURN P;
       \ENDIF
  \end{algorithmic}
\end{algorithm}

We now redesign the $getState$ function in three modes, which focus on different tasks. The \ue mode tries to fulfill the satisfaction of Until formulas in a global set, the \sp mode is to pursue a satisfying cycle, and the \up mode is to pursue an unsatisfiable core if all Until formulas remained in the set are postponed. The $getState$ function runs in the \ue mode by default. The implementation of the \ue mode is shown in Algorithm \ref{alg:ue}.

In the \ue mode a global get $U$ is used to keep the Until formulas postponed so far. It is initialized as $U(\phi)$, which is the set of Until formulas in $CF(\phi)$ ($\phi$ is the input formula). The task of the \ue mode is to check whether some elements in $U$ can be satisfied (Line 5). If $U$ becomes empty, the \sp mode is invoked to seek a satisfiable cycle (Line 1,2). If the \sp mode does not succeed, the set $U$ is reset to $U(\psi)$, where $\psi$ is the current state. Then the attempt to satisfy the elements in $U$ is invoked again (Line 5). Now if the attempt is successful then $U$ is updated and the assignment found is returned to LTL-CHECK and $getState$ terminates (Line 9,10). Otherwise, it turns into the \up mode. Note the two constraints added in Line 5 enables SAT solvers to prune those states which postpone all elements in $U$ and whose next states are already explored.

\begin{algorithm}
\caption{Implementation of the \sp mode}\label{alg:sp}
\begin{algorithmic}[1]    
\REQUIRE an LTL formula $\phi$ and a global set $U$; 
\ENSURE  a next state of $\phi$; 
       \STATE Let \[ \psi = \left\{
                            \begin{array}{l l}
                                 visisted[0](\phi) & \quad \text{if U = $\emptyset$ for the first time;}\\
                                  & \quad \text{otherwise,}\\
                                  \bigvee_{i\leq pos} visisted[i] &  \quad \text{pos is the previous position}\\
                                        &  \quad \text{when U becomes $\emptyset$.}
                            \end{array} \right.\]
       \STATE Let P be an assignment of \\
       $(\f{\phi}\wedge X(\psi))^p\wedge\bigwedge_{\lambda\in explored}\neg (X(\lambda))^p$;
       \IF {P is not empty}
         \RETURN $P$;
       \ELSE
         \STATE Turn back into the \ue mode.
       \ENDIF
  \end{algorithmic}
\end{algorithm}

To find a satisfying cycle, the \sp mode tries to check whether there is a visited state whose position in $visited_S$ is less or equal than the position where $U$ becomes empty in previous time. Specially if $U$ becomes empty for the first time then only the initial state can be considered. Line 1 of Algorithm \ref{alg:sp} assigns the disjunction of these states to be a constraint $\psi$. Then line 2 shows the inquiry to SAT solvers to get an assignment whose next state appears before $U$ becomes empty in previous time. If the inquiry succeeds, the \sp mode returns the assignment found to LTL-CHECK, and $getState$ terminates (Line 4). Otherwise the \sp mode turns back to the \ue mode for further processing (Line 6). 

Consider the visited state $\psi$ whose position in $visited_S$ is the one before where $U$ becomes empty (If $U$ becomes empty for the first time, then $\psi$ must be the input formula $\phi$). So the cycle formed by the \sp mode succeeds to satisfy all Until formulas in $CF(\psi)$ -- since $U$ is reset to be $U(\psi)$ after $\psi$ and all elements in $U$ are satisfied when $U$ becomes empty again. Thus according to Theorem \ref{thm:satreasoning} a satisfying model is found.

\subsection{Unsat Core Extraction}
It may happen that all elements in $U$ are postponed in the \ue mode. Then the function turns into the \up mode, trying to figure out whether 1). $\psi$ is finally satisfied; or 2). $\psi$ is postponed forever. The trivial way to check all reachable states of $\phi$ postpone $\psi$ is proven not efficient due to its large cost, so we must introduce a more clever methodology. 

Now let's dig into the reason why $\psi$ is postponed in $\phi$. $\psi$ is postponed in $\phi$ iff the formula $\f{\phi}\wedge v(\psi)$ is unsatisfiable. So there must be a minimal unsat core $S_1\subseteq CF(\phi)$ such that 
\begin{itemize}
  \item $\f{\bigwedge S_1}\wedge v(\psi)$ is unsatisfiable; and 
  \item for each $S_1'\subset S_1$, $\f{\bigwedge S_1'}\wedge v(\psi)$ is satisfiable.
\end{itemize}
Note there are already works on computing such minimal unsat core (see \cite{MSL11}), and we can directly apply them here. 

Now the task changes to check whether there exists a next state $\phi_1$ of $\phi$ that can avoid the appearance of $S_1$, i.e. $S_1\not\subseteq CF(\phi_1)$. We can achieve this via SAT solvers by feeding them the formula $\f{\phi}\wedge \neg X(\bigwedge S_1)$. If the formula is satisfiable, then the modeling assignment is the next state that can avoid $S_1$; Otherwise, there must be a minimal unsat core $S_2\subseteq CF(\phi)$ to $\bigwedge S_1$, making $\f{\bigwedge S_2}\wedge \neg X(\bigwedge S_1)$ is unsatisfiable -- as $S_1$ to $\psi$. Then the task changes to check whether the avoidance of $S_2$ can be achieved in the next state of $\phi$ $\ldots$

This is a recursive process and one can see we may maintain a sequence of minimal unsat cores $\rho=S_1,S_2,\ldots$ during computing the next state of $\phi$. Then the question raises up that if there is no other next state other than itself for current state $\phi$, how can it terminate the minimal-unsat-core computation?

Let $\theta_i=\psi\wedge \bigwedge_{1\leq j\leq i} S_j$, i.e. the formula which conjuncts $\psi$ and minimal unsat cores from $S_1$ to $S_i$. Apparently it holds $\theta_{i+1}\Rightarrow \theta_i (i\geq 1)$ and what we expect more for finding a next state is $\theta_{i}\not\Rightarrow \theta_{i+1}$. Once $\theta_{i}\Rightarrow \theta_{i+1}$ holds as well, it indicates $\theta_i$ is the unsat core we want to capture Until formula $\psi$ is postponed forever from $\phi$. Actually the reason is under this case we can prove that, $\f{\theta_i}\wedge\neg X(\theta_i)$ is unsatisfiable which means $\psi$ will be postponed in all reachable states of $\phi$.

Assume a next state $\phi_1$ of $\phi$ is found according to above strategy and a sequence $\rho=S_1,S_2,\ldots,S_k (k\geq 1)$ is maintained. Now $\phi_1$ tries to avoid $S_{k-1}$ in its next state -- Note $S_k$ is not in $CF(\phi_1)$  but $S_{k-1}$ is. (If $k=1$ then $\phi_1$ tires to avoid $\psi$ in its next state). The corresponding formula is $\f{\phi_1}\wedge \neg X(S_{k-1})$. This attempt may not succeed, and there may be another minimal unsat core $S_k'$ (not $S_k$) to $S_{k-1}$. We must also maintain this information in the sequence. So it turns out the sequence we have to maintain is the sequence of set of minimal unsat cores, i.e. $\rho=Q_1,Q_2,\ldots,Q_k$ where each $Q_i$ is a set of minimal unsat cores. 

For example consider $\phi=(a U \neg b)\wedge b\wedge X b\wedge XXb$, we can see easily $\psi=a U \neg b$ is postponed currently and $Q_1=\{\{b\}\}$. Moreover $Q_2=\{\{X b\}\}$ and $Q_3=\{\{XXb\}\}$. According to our strategy above, we first try to avoid elements in $Q_3$, that is to check $\f{\phi}\wedge \neg X(\bigvee_{S\in Q_3}\bigwedge S)$. Then we get the next state $\phi_1=(a U \neg b)\wedge b\wedge Xb$. Similarly we get $\phi_2=(a U \neg b)\wedge b$ and $\phi_3=a U \neg b$. By then we know $\psi=a U \neg b$ is not postponed. 

For the formula $\phi= F a \wedge G \neg a$, we know $\psi=F a$ is postponed currently and $Q_1=\{\{Fa, G\neg a\}\}$. Since we know that $\theta_1=\psi\wedge\bigvee_{S\in Q_1}\bigwedge S= Fa \wedge G\neg a$ and $\theta_1\wedge \neg X\theta_1$ is unsatisfiable -- which means an unsat core is already found and $\psi$ will be postponed from $\phi$ forever. So we can terminate by returning unsatisfiable and the unsat core $Fa\wedge G\neg a$.

For a more complicated example we consider $\phi= F(\neg a\wedge \neg b)\wedge a\wedge G((a\rightarrow Xb)\wedge (b\rightarrow Xa))$. The Until formula $\psi=F(\neg a\wedge \neg b)$ is postponed currently and $Q_1=\{\{a\}\}$. To avoid the elements in $Q_1$ in next state, i.e. $\f{\phi}\wedge \neg X(\bigvee_{S\in Q_1}\bigwedge S)$, we can get the next state $\phi_1=F(a\wedge b)\wedge b\wedge G((a\rightarrow Xb)\wedge (b\rightarrow Xa))$. Now $\psi$ is also postponed due to $b\in CF(\phi_1)$, and we update $Q_1=\{\{a\},\{b\}\}$. Then we collect existed states $\phi,\phi_1$ together into set $Sts$, and try to avoid elements in $Q_1$ in the next state of states -- the formula is $(\bigvee_{\phi\in Sts}\f{\phi})\wedge \neg X(\bigvee_{S\in Q_1}\bigwedge S)$. But this attempt still fails. So we get $Q_2=\{\{a, G((a\rightarrow Xb)\wedge (b\rightarrow Xa))\},\{b, G((a\rightarrow Xb)\wedge (b\rightarrow Xa))\}\}$. As we see $\theta_2=\psi\wedge (\bigvee_{S\in Q_1}\bigwedge S)\wedge (\bigvee_{S\in Q_2}\bigwedge S)$ is an invariant, i.e. $\f{\theta_2}\wedge X\neg \theta_2$ is unsatisfiable, so we know $\theta_2$ is the unsat core and $\phi$ is unsatisfiable. 

Now we start to define the sequence we maintain in the \up mode, which we call \textit{avoidable sequence}.

\begin{definition}[Avoidable Sequence]
  For an LTL formula $\phi$ and $\psi$ is an Until subformula of $\phi$, the avoidable sequence of $\psi$ is a sequence $\rho=Q_0,Q_1,\ldots$, where $Q_i\subseteq 2^{cl(\phi)}$ and 
  \begin{itemize}
    \item $Q_0=\{\{\psi\}\}$;
    \item For $i\geq 0$, let $\theta_i=\bigwedge_{0\leq j\leq i}(\bigvee_{S\in Q_{j}}\bigwedge S)$ then $S'\in Q_{i+1}$ ($S'\subseteq cl(\phi)$) if, 
    \begin{enumerate}
      \item $(\theta_i \wedge \neg X(\theta_i))$ is satisfiable;
      \item $(\bigwedge S') \wedge (\theta_i \wedge \neg X(\theta_i))$ is unsatisfiable;
      \item For each $S''\subset S'$, $(\bigwedge S'') \wedge (\theta_i \wedge \neg X(\theta_i))$ is satisfiable.
    \end{enumerate} 
  \end{itemize}
  Specially, we say $\rho$ is an unavoidable sequence if there is $k>0$ such that $\theta_{k} \Rightarrow \theta_{k+1}$.
\end{definition}

The avoidable sequence is an abstract way to represent set of states that postpone $\psi$. If a state $\phi'$ in the transition system $T_{\phi}$ satisfies $\phi'\Rightarrow \theta_i$, then $\phi'$ is represented by $\theta_i$. Let $S_{\theta_i} (S_{\theta_{i+1}})$ be the set of states represented by $\theta_i (\theta_{i+1})$, it is easy to see $S_{\theta_{i+1}}\subseteq S_{\theta_i}$ due to $\theta_{i+1}\Rightarrow \theta_i$. For example, assume the avoidable sequence $\rho=\{\{a\}\},\{\{b\}\}$, then we know $\theta_1=a$ and $\theta_2=a\wedge b$. Apparently the state $a\wedge \neg b$ can be represented by $\theta_1$, but not by $\theta_2$. Since the set of states in $T_{\phi}$ is finite, so the avoidable sequence must also be finite. 

\begin{lemma}
  For an LTL formula $\phi$ and $\psi$ is an Until subformula of $\phi$, the avoidable sequence of $\psi$ is finite.
\end{lemma}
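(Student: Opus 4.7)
The plan is to bound the length of the avoidable sequence by the size of the finite state space of $T_\phi$. The key observation is that by the very definition of $\theta_i$, one has the implication chain $\theta_{i+1} \Rightarrow \theta_i$ for every $i \geq 0$, since $\theta_{i+1}$ is obtained from $\theta_i$ by conjoining an additional disjunction $\bigvee_{S \in Q_{i+1}} \bigwedge S$. Thus, if we let $M_i = \{A : A \text{ is a propositional assignment for } \phi \text{ with } A \models_p \theta_i\}$, we get a monotonically non-increasing chain $M_0 \supseteq M_1 \supseteq M_2 \supseteq \cdots$ of subsets of the propositional-assignment space of $\phi$.

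First I would observe that $cl(\phi)$ is finite (by the remark following its definition, its size is linear in $|\phi|$), hence the set of propositional assignments for $\phi$ is a subset of $2^{cl(\phi)}$, and is therefore finite, of cardinality at most $2^{|cl(\phi)|}$. Consequently the descending chain $M_0 \supseteq M_1 \supseteq \cdots$ must stabilize after at most $2^{|cl(\phi)|}$ steps; that is, there exists some $k$ with $M_k = M_{k+1}$.

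Next I would argue that once the chain stabilizes at step $k$, the sequence qualifies as unavoidable and hence terminates. Indeed, $M_k = M_{k+1}$ means every propositional assignment satisfying $\theta_k$ also satisfies $\theta_{k+1}$, which is exactly $\theta_k \Rightarrow \theta_{k+1}$ — matching the terminating condition in the definition. Combined with the trivial direction $\theta_{k+1} \Rightarrow \theta_k$, this gives $\theta_k \equiv \theta_{k+1}$, so no further $Q_j$ can be formed that would properly refine $\theta_k$.

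To complete the argument, I must verify that as long as the sequence is strictly progressing ($M_i \supsetneq M_{i+1}$), the construction of $Q_{i+1}$ is well-defined and genuinely adds content. Concretely, whenever $(\theta_i \wedge \neg X(\theta_i))$ is satisfiable (which is precisely when we are not yet at a fixed point), clauses like $\bigwedge S'$ with $S' \subseteq cl(\phi)$ whose conjunction with $(\theta_i \wedge \neg X(\theta_i))$ is unsatisfiable do exist — e.g., any unsatisfiable superset — so minimal unsat cores are well-defined and $Q_{i+1}$ is non-empty. The only subtle point, and the one I expect to need the most care, is confirming that a non-empty $Q_{i+1}$ indeed strictly shrinks the model set, so that strict descent in $M_i$ is maintained until the fixed point is reached; this follows from minimality of the cores in $Q_{i+1}$, since each such core rules out at least one assignment of $\theta_i \wedge \neg X(\theta_i)$ that witnessed avoidability at step $i$. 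Putting all of this together bounds the length of $\rho$ by $2^{|cl(\phi)|}$, proving finiteness.\qed
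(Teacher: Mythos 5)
Your proof is correct and follows essentially the same route as the paper, which (in the informal paragraph preceding the lemma) argues that $\theta_{i+1}\Rightarrow\theta_i$ yields a non-increasing chain of represented state sets inside the finite state space of $T_\phi$, so the chain stabilizes and the sequence terminates. Your version is just a slightly more careful rendering of that argument (and, as you yourself note, the strict-descent discussion in your last paragraph is not actually needed, since stabilization of the chain already coincides with the unavoidability/termination condition).
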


Specially when $\rho$ is an unavoidable sequence, i.e. $\theta_{k-1} \Rightarrow \theta_k (k>0)$, it means essentially $S_{\theta_i}=S_{\theta_{i+1}}$. We can prove that in this case $\theta_i$ covers all states postpone $\psi$ forever. Before that we introduce the following lemma.

\begin{lemma}\label{lem:invariant}
  For an LTL formula $\phi$ and $\psi$ is an Until subformula of $\phi$, if $\rho=Q_0,Q_1,\ldots,Q_k (k\geq 1)$ is an unavoidable sequence of $\psi$, then it holds that $\f{\theta_k}\wedge \neg X(\theta_k)$ is unsatisfiable. 
\end{lemma}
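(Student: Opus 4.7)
The plan is to unwind the definition of unavoidable sequence and combine it with the minimal-unsat-core property built into every $Q_i$. Throughout I work in the propositional encoding $(\cdot)^p$ of Definition \ref{def:propatoms}: (un)satisfiability of the formulas that appear here is to be interpreted propositionally with Until/Release/Next subformulas treated as atoms. Theorem \ref{thm:xnf} lets me freely replace $\theta_k$ by $\f{\theta_k}$ (and vice versa) without affecting satisfiability, so it suffices to show that $\theta_k \wedge \neg X(\theta_k)$ is unsatisfiable, equivalently $\theta_k \Rightarrow X(\theta_k)$.

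First I would unpack the hypothesis. By assumption, $\rho$ is unavoidable at some index $k\geq 1$, i.e., $\theta_k \Rightarrow \theta_{k+1}$. Writing $\theta_{k+1} = \theta_k \wedge \bigl(\bigvee_{S \in Q_{k+1}} \bigwedge S\bigr)$, the implication simplifies to
\[
\theta_k \;\Rightarrow\; \bigvee_{S \in Q_{k+1}} \bigwedge S.
\]
Before invoking this I would dispatch the degenerate case where $Q_{k+1}$ is empty: then the right-hand side is $\bot$, $\theta_k$ itself is unsatisfiable, and the conclusion follows immediately. Otherwise $Q_{k+1}$ is nonempty, which means condition 1 in the definition of $Q_{k+1}$ held and condition 2 applies to every $S \in Q_{k+1}$.

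Next I would feed in the min-unsat-core condition. Condition 2 says that for every $S \in Q_{k+1}$ the conjunction $(\bigwedge S) \wedge \theta_k \wedge \neg X(\theta_k)$ is unsatisfiable, which is equivalent to $\theta_k \wedge \bigwedge S \Rightarrow X(\theta_k)$. Disjoining over all $S \in Q_{k+1}$ yields
\[
\theta_k \wedge \Bigl(\bigvee_{S \in Q_{k+1}} \bigwedge S\Bigr) \;\Rightarrow\; X(\theta_k).
\]
Chaining with the earlier implication gives $\theta_k \Rightarrow X(\theta_k)$, i.e., $\theta_k \wedge \neg X(\theta_k)$ is unsatisfiable. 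Rewriting $\theta_k$ as $\f{\theta_k}$ by Theorem \ref{thm:xnf} finishes the proof.

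The step I expect to be the main subtlety is making the propositional bookkeeping precise: I rely on $X(\theta_k)$ being treated as a single opaque atom throughout, so the implications composed above are purely propositional tautologies modulo $(\cdot)^p$. Pinning this down requires being careful that the $(\cdot)^p$ abstraction used when defining the $Q_i$ as minimal unsat cores is the same abstraction used when reasoning about $\f{\theta_k} \wedge \neg X(\theta_k)$; this is exactly the setup of Section \ref{sec:framework}, so once it is invoked the remainder is routine Boolean chaining and no further calculation is needed.
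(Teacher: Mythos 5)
In substance your proof is the paper's own argument, recast as Boolean implication chaining instead of the paper's semantic phrasing in terms of the sets of ``represented states'' $S_{\theta_{k-1}}=S_{\theta_k}$: unavoidability says every $\theta$-state is covered by some minimal core, the defining unsatisfiability of each core (your condition 2) says a covered state has all its successors satisfying $\theta$, and chaining the two gives that $\theta$ is inductive, i.e.\ $\f{\theta_k}\wedge\neg X(\theta_k)$ is unsatisfiable. Your version is the cleaner of the two, and the use of Theorem \ref{thm:xnf} to pass between $\theta_k$ and $\f{\theta_k}$ is fine. One index does need repair, though: the lemma supplies only the finite sequence $Q_0,\ldots,Q_k$, so the set $Q_{k+1}$ over which you quantify does not exist. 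The implication actually available --- and the one the paper's proof uses, its formal definition of ``unavoidable'' notwithstanding --- is $\theta_{k-1}\Rightarrow\theta_k$, with the relevant cores sitting in $Q_k$. Running your chaining one index lower gives $\theta_{k-1}\Rightarrow X(\theta_{k-1})$, and this transfers to $\theta_k\Rightarrow X(\theta_k)$ because $\theta_k\Rightarrow\theta_{k-1}$ by construction of the $\theta_i$ and $\theta_{k-1}\Rightarrow\theta_k$ by hypothesis. A final remark that reflects on the definition rather than on your argument: you correctly note that a nonempty core set presupposes condition 1, i.e.\ that $\theta\wedge\neg X(\theta)$ is \emph{satisfiable} --- which is exactly the negation of what you then derive. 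Read literally, the lemma's hypotheses are therefore mutually inconsistent and the statement vacuous; both you and the paper tacitly drop condition 1 in the proof, which is clearly the intended reading, but it is the definition of avoidable sequence that needs fixing, not your chain of implications.
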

\begin{proof}
  Since $\rho$ is an unavoidable sequence, so $S_{\theta_{k-1}}=S_{\theta_k}$. Assume $\lambda$ is a state represented by $S_{\theta_{k-1}}$, and it is also in $S_{k}$ if there is $S\in Q_k$ such that $S\subseteq CF(\lambda)$. Now let's recall the meaning of elements in $Q_k$. Since $S\in Q_k$ so it satisfies $\f{\lambda\wedge\bigwedge S}\wedge \neg X(\theta_{k-1})$ is unsatisfiable. This means the reason (minimal unsat core) causing all next states of $\lambda$ are also in $S_{\theta_k}$ are contained by $\lambda$ itself. Thus it indicates all next states of states in $S_{\theta_{k-1}}$ are also in $S_{\theta_k}$, which formally means $\f{\theta_k}\wedge \neg X(\theta_k)$ is unsatisfiable. \qed
\end{proof}

\begin{algorithm}[H]
\caption{Implementation of the \up mode}\label{alg:improvedup}
  \begin{algorithmic}[1]     
  \REQUIRE an LTL formula $\phi$ postpone all Until formulas in $U$; 
  \ENSURE  a finite path satisfies at least one element of $U$ or an unsat core;
       \STATE Get some reachable states from $\phi$ and put them into $Sts$ (including $\phi$);
       \STATE Let $\rho=\{\{U\}\}$ and $pos = 0$;
       \STATE Let $\theta_{pos} = \bigwedge_{0\leq i\leq pos}\bigvee_{S\in \rho[i]}\bigwedge S$;
       \WHILE {true}
         
         \WHILE {$(\f{\bigvee Sts}\wedge\neg X(\theta_{pos}))^p$ is satisfiable}
           \STATE Let $P$ be the assignment and add $X(P)$ to $Sts$;
           \STATE $pos = pos - 1$ and update $\theta_{pos}$ ($pos$ is changed);
           \IF {$pos < 0$}
             \RETURN the finite path leading from $\phi$ to $X(P)$;
           \ENDIF
         \ENDWHILE
         \STATE Add computed set of minimal unsat cores to $\rho[pos+1]$ (if $\rho[pos+1]$ is not existed, then extend it);
         \STATE $pos=pos+1$ and update $\theta_{pos}$;
         \WHILE {$\f{\bigvee Sts}\wedge\neg X(\theta_{pos})$ is unsatisfiable}
           \STATE Add computed set of minimal unsat cores to $\rho[pos+1]$ (if $\rho[pos+1]$ is not existed, then extend it);
           \STATE $pos=pos+1$ and update $\theta_{pos}$;
           \IF {$\f{\theta_{pos}}\wedge\neg X(\theta_{pos})$ is unsatisfiable}
             \RETURN $\theta_{pos}$ as the unsat core;
           \ENDIF
         \ENDWHILE
         \STATE Let $P$ be the assignment of $(\f{\bigvee Sts}\wedge\neg X(\theta_{pos}))^p$ and add $X(P)$ to $Sts$;
         \STATE $pos = pos - 1$ and update $\theta_{pos}$;
         
       \ENDWHILE
  \end{algorithmic}
\end{algorithm}

Let $S_{\rho}$ be the set of states represented by the avoidable sequence $\rho$, then we have

\begin{theorem}
  For an LTL formula $\phi$ and $\psi$ is an Until subformula of $\phi$, if $\rho$ is an unavoidable sequence of $\psi$, then all states represented by $S_{\rho}$ are unsatisfiable. 
\end{theorem}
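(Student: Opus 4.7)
The plan is to prove the theorem by contradiction, combining the invariance guaranteed by Lemma \ref{lem:invariant} with the satisfiability characterization of Theorem \ref{thm:reasoning}. Suppose some state $\lambda \in S_\rho$ is LTL-satisfiable; the goal is to show that the Until subformula $\psi$ must be stuck in every run starting from $\lambda$, contradicting this assumption.

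The first ingredient is to apply Lemma \ref{lem:invariant} at the stabilization index $k$, yielding that $\f{\theta_k}\wedge\neg X(\theta_k)$ is propositionally unsatisfiable. Consequently every propositional assignment containing $\theta_k$ also contains $X(\theta_k)$, so its successor in $T_\phi$ again contains $\theta_k$. By induction every descendant of $\lambda$ along any run of $T_\phi$ lies in $S_\rho$, and since $\theta_0=\psi$ is a top-level conjunct of $\theta_k$, every such state contains the Until formula $\psi=\psi_1 U \psi_2$.

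Now invoke Theorem \ref{thm:reasoning} in the converse direction: if $\lambda$ were satisfiable, there would be a run $r=s_0,s_1,\ldots$ of $T_\phi$ with $s_0=\lambda$ in which no Until subformula is stuck. Combining this with the invariance step, every $s_i$ lies in $S_\rho$ and contains $\psi$, so there must be some $i$ at which $\psi$ is satisfied immediately, i.e., $\psi_2 \in s_i$. I would then derive the desired contradiction from $\psi_2,\theta_k \in s_i$ by showing that the minimal unsat cores accumulated in $Q_1,\ldots,Q_k$ block immediate satisfaction: each $S\in Q_j$ is chosen so that any assignment containing $\theta_{j-1}$ and $\bigwedge S$ is forced to pass $\theta_{j-1}$ forward, and stabilization at index $k$ closes these refinements into a fixpoint that also excludes $\psi_2$. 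Formally this would be an auxiliary induction on $j\le k$ proving that $\theta_j\wedge\psi_2$ is propositionally unsatisfiable, which contradicts $\psi_2,\theta_k \in s_i$ and hence forces $\lambda$ to be unsatisfiable.

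The main obstacle I anticipate is precisely this last step. The definition of avoidable sequence starts from $Q_0=\{\{\psi\}\}$ rather than from the postponement literal $v(\psi)$ that motivated entering the \up mode, so bridging ``every run state contains $\psi$'' to ``every run state postpones $\psi$'' is not purely syntactic. The cleanest remedy is either to strengthen the base case of the avoidable sequence to include $v(\psi)$ in accordance with the motivating text, or to supply a separate monotonicity lemma certifying that the iterative unsat-core refinement preserves the original postponement obligation. Once this gap is filled, the contradiction between $\psi$ being unstuck and $\theta_k$ being a next-state invariant is immediate, and the contrapositive of Theorem \ref{thm:reasoning} delivers the unsatisfiability of every $\lambda\in S_\rho$.
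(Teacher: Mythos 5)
Your strategy coincides with the paper's: use Lemma~\ref{lem:invariant} to get that $S_\rho$ is closed under the transition relation, observe that every state in $S_\rho$ carries the Until formula $\psi$, and then invoke the converse direction of Theorem~\ref{thm:reasoning} (satisfiability would yield a run in which $\psi$ is not stuck) to reach a contradiction. The paper's own proof is exactly this argument compressed into three sentences: closure from Lemma~\ref{lem:invariant}, the assertion that ``every state represented by $S_\rho$ can postpone $\psi$,'' and the conclusion that $\psi$ is postponed forever, hence unsatisfiable.

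The obstacle you flag in your final paragraph is genuine, and it is a gap in the paper's proof as much as in your reconstruction: the formal definition of the avoidable sequence seeds $Q_0=\{\{\psi\}\}$, so membership in $S_\rho$ only guarantees that a state \emph{contains} $\psi$, not that it is forced to \emph{postpone} $\psi$ (i.e., that $v(\psi)$ cannot be set true there). The paper bridges this by bare assertion; the intended justification lives only in the surrounding prose, where the first layer of cores is computed against $\f{\phi}\wedge v(\psi)$ being unsatisfiable, so that a state containing such a core genuinely blocks immediate satisfaction. Your two proposed repairs --- building the postponement condition into the base case of the sequence, or proving a separate lemma that the iterative core refinement preserves the postponement obligation --- are precisely what is needed to make the step from ``every reachable state contains $\psi$'' to ``$\psi$ is stuck on every run'' rigorous. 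So: same route as the paper, with a correct diagnosis of the one step the paper leaves unjustified.
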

\begin{proof} 
  From Lemma \ref{lem:invariant} we know all next states of states in $S_{\rho}$ are also in $S_{\rho}$. And since every state represented by $S_{\rho}$ can postpone $\psi$, so all states in $S_{\rho}$ together can postpone $\psi$ forever. Thus all states represented by $S_{\rho}$ are unsatisfiable. \qed
\end{proof}

Now we present the improved algorithm for \up mode. The algorithm maintains the information of avoidable sequence in the mode and utilizes it to locate the result. We should claim that, computing elements of avoidable sequence is relatively expensive so far, and especially for extending the length of the sequence. Consider that if finally the Until formula turns out to be satisfiable, then it may wast time to maintain unnecessary long sequence. To balance these situations, our algorithm starts from a set of states reachable from the initial postponed state rather than only itself, in which case it can increase the possibility to find the Until formula satisfiable earlier.

Note that Let $\rho=Q_0,Q_1,\ldots,Q_k$ and we use $\rho[i]$ to represent $Q_i$ in the algorithm. The variable $pos$ points to the position of $\rho$ in which the elements should be avoided currently. The notation $X(P)$ means the set of Next formulas in $P$ (they form the next state indeed). In Line 1, users can decide by themselves the number of reachable states and how to acquire them. 

\section{More experiments on LTL-Satisfiability Checking}\label{app:detail_results}

\begin{table*}[htb]
\centering
    \caption{Experimental results on the Schuppan-collected benchmark for satisfiable formulas. 
     Each cell lists a tuple $\langle t, n\rangle$ where $n$ is the total number of solved formulas and 
     $t$ is the total checking time for solving these $n$ cases (in seconds).}\label{tab:result3}
    \scalebox{0.8}
    {
    \begin{tabular}{|l|l|r||l|r||l|r||l|r||l|r||l|r||}
    \hline
    Formula type  & \multicolumn{2}{c|}{ls4} & \multicolumn{2}{c|}{TRP++} & \multicolumn{2}{c|}{NuXmv-BMCINC} &  \multicolumn{2}{c|}{Aalta\_v1.2} & \multicolumn{2}{c|}{NuXmv-IC3-Klive} & \multicolumn{2}{c|}{\tool}\\
    \hline
    /acacia/example  &  152 & 49 &  192 & 50 &  0 & 50 & 1 & 50 &  8 & 50 &  1 & 50\\
    \hline
    /acacia/demo-v3  &  748  & 40 &  554  & 34 & 3  & 72 & 3 & 72 & 30  & 72 & 3 & 72\\
    \hline
    /acacia/demo-v22  &  60  & 20 &  67  & 20 &  0 & 20 &  2 & 20 & 4 & 20 &  1 & 20\\
    \hline
    /alaska/lift  &  487 & 22 &  322 & 16 &  282 & 238 & 4084  & 163 &  529 & 233 &  367 & 229\\
    \hline  
    /alaska/szymanski  &  27 & 8 &  43 & 4 &  0 & 8 & 1  & 8 &  2 & 8 &  0 & 8\\
    \hline
    /anzu/amba  &  0 & 0 &  0 & 0 &  116 & 95 &  2686 & 65 & 582 & 94 &  273 & 98\\
    \hline 
    /anzu/genbuf  &  0 & 0 &  0 & 0 &  122 & 109 &  3343 & 79& 570 & 107 &  422 & 116\\
    \hline
    /rozier/counter  &  1214 & 90 &  1851 & 108 &  25 & 88 &  928 & 60 & 88 & 87 &  289 & 114\\
    \hline  
    /rozier/formulas  &  163 & 3890 &  6087 & 3370 &  88 & 3890 &  1372 & 3890 & 649 & 3890 &  28 & 3890\\
    \hline  
    /rozier/pattern  &  936 & 260 &  1230 & 251 &  1025 & 480 & 8  & 488 &  2232 & 471 &  9 & 488\\
    \hline  
    /schuppan/O1formula  &  49 & 10 &  51 & 10 &  7 & 27 & 2 & 27 & 59  & 27 & 1 & 27\\
    \hline  
    /schuppan/O2formula  &  77 & 10 &  89 & 10 &  98 & 24 &  2 & 27 & 253 & 27 &  0 & 27\\
    \hline  
    /schuppan/phltl  &  87 & 5 &  33 & 4 &  135 & 17 &  233 & 10 & 78 & 18 &  1 & 17\\
    \hline  
    /trp/N5x  &  81 & 371 &  83 & 371 &  10 & 371 &  309 & 360 & 145 & 371 &  12 & 371\\
    \hline 
    /trp/N5y  &  334 & 234 &  331 & 234 &  8 & 234 & 16 & 234 & 84 & 234 &  10 & 234\\
    \hline  
    /trp/N12x  &  4425 & 268 &  1639 & 65 &  33 & 625 &  768 & 620 & 531 & 625 &  94 & 625\\
    \hline  
    /trp/N12y  & 3173 & 118 &  3062 & 111 &  29 & 313 &  413 & 313 & 318 & 313 &  12 & 313\\
    \hline  
    /forobots  &  696 & 53 &  914 & 53 &  3 & 53 &  280 & 53 & 27 & 53 &  30 & 53\\
    \hline 
    Total & 12718 & 5448 & 16556 & 4711 & 1994 & 6714 & 14451 & 6689 & 6189 & 6700 & 1554 & 6750\\
    \hline
\end{tabular}
}
\end{table*}

\begin{table*}[htb]
\centering
    \caption{Experimental results on the Schuppan-collected benchmark for unsatisfiable formulas. 
     Each cell lists a tuple $\langle t, n\rangle$ where $n$ is the total number of solved formulas and 
     $t$ is the total checking time for solving these $n$ cases (in seconds).}\label{tab:result4}
    \scalebox{0.8}
    {
    \begin{tabular}{|l|l|r||l|r||l|r||l|r||l|r||l|r||}
    \hline
    Formula type  & \multicolumn{2}{c|}{ls4} & \multicolumn{2}{c|}{TRP++} & \multicolumn{2}{c|}{NuXmv-BMCINC} &  \multicolumn{2}{c|}{Aalta\_v1.2} & \multicolumn{2}{c|}{NuXmv-IC3-Klive} & \multicolumn{2}{c|}{\tool}\\
    \hline
    /acacia/example  &  0 & 0 &  0 & 0 &  0 & 0 & 0 & 0 &  0 & 0 &  0 & 0\\
    \hline
    /acacia/demo-v3  &  0  & 0 &  0  & 0 & 0  & 0 & 0 & 0 & 0  & 0 & 0 & 0\\
    \hline
    /acacia/demo-v22  &  0  & 0 &  0  & 0 &  0 & 0 & 0 & 0 & 0 & 0 &  0 & 0\\
    \hline
    /alaska/lift  &  73 & 22 &  3 & 2 &  77 & 8 & 384  & 10 &  38 & 34 &  544 & 30\\
    \hline  
    /alaska/szymanski  &  0 & 0 &  0 & 0 &  0 & 0 & 0  & 0 &  0 & 0 &  0 & 0\\
    \hline
    /anzu/amba  &  0 & 0 &  0 & 0 &  0 & 0 &  0 & 0 & 0 & 0 &  0 & 0\\
    \hline 
    /anzu/genbuf  &  0 & 0 &  0 & 0 &  0 & 0 &  0 & 0& 0 & 0 &  0 & 0\\
    \hline
    /rozier/counter  &  0 & 0 &  0 & 0 &  0 & 0 &  0 & 0 & 0 & 0 &  0 & 0\\
    \hline  
    /rozier/formulas  &  4 & 110 &  66 & 107 &  29 & 91 &  40 & 100 & 15 & 110 &  1 & 110\\
    \hline  
    /rozier/pattern  &  0 & 0 &  0 & 0 &  0 & 0 & 0  & 0 &  0 & 0 &  0 & 0\\
    \hline  
    /schuppan/O1formula  &  103 & 10 &  27 & 9 &  7 & 27 & 2 & 27 & 36  & 27 & 2 & 27\\
    \hline  
    /schuppan/O2formula  &  106 & 9 &  16 & 3 &  3 & 2 &  2 & 27 & 69 & 20 &  5 & 27\\
    \hline  
    /schuppan/phltl  &  64 & 4 &  19 & 3 &  22 & 4 &  89 & 4 & 15 & 7 &  36 & 4\\
    \hline  
    /trp/N5x  &  62 & 109 &  62 & 109 &  17 & 100 &  139 & 88 & 42 & 109 &  8 & 109\\
    \hline 
    /trp/N5y  &  113 & 46 &  104 & 45 &  0 & 0 & 130 & 10 & 18 & 46 &  16 & 46\\
    \hline  
    /trp/N12x  &  0 & 0 &  0 & 0 &  56 & 117 &  456 & 20 & 174 & 175 &  64 & 174\\
    \hline  
    /trp/N12y  & 277 & 6 &  180 & 4 &  0 & 0 &  34 & 13 & 95 & 67 &  238 & 67\\
    \hline  
    /forobots  &  293 & 25 &  388 & 25 &  2 & 7 &  280 & 10 & 10 & 25 &  32 & 23\\
    \hline 
    Total & 1102 & 323 & 906 & 307 & 215 & 356 & 1556 & 309 & 512 & 620 & 946 & 617\\
    \hline
\end{tabular}
}
\end{table*}

\begin{figure*}
      \centering
      \includegraphics[scale = 0.5]{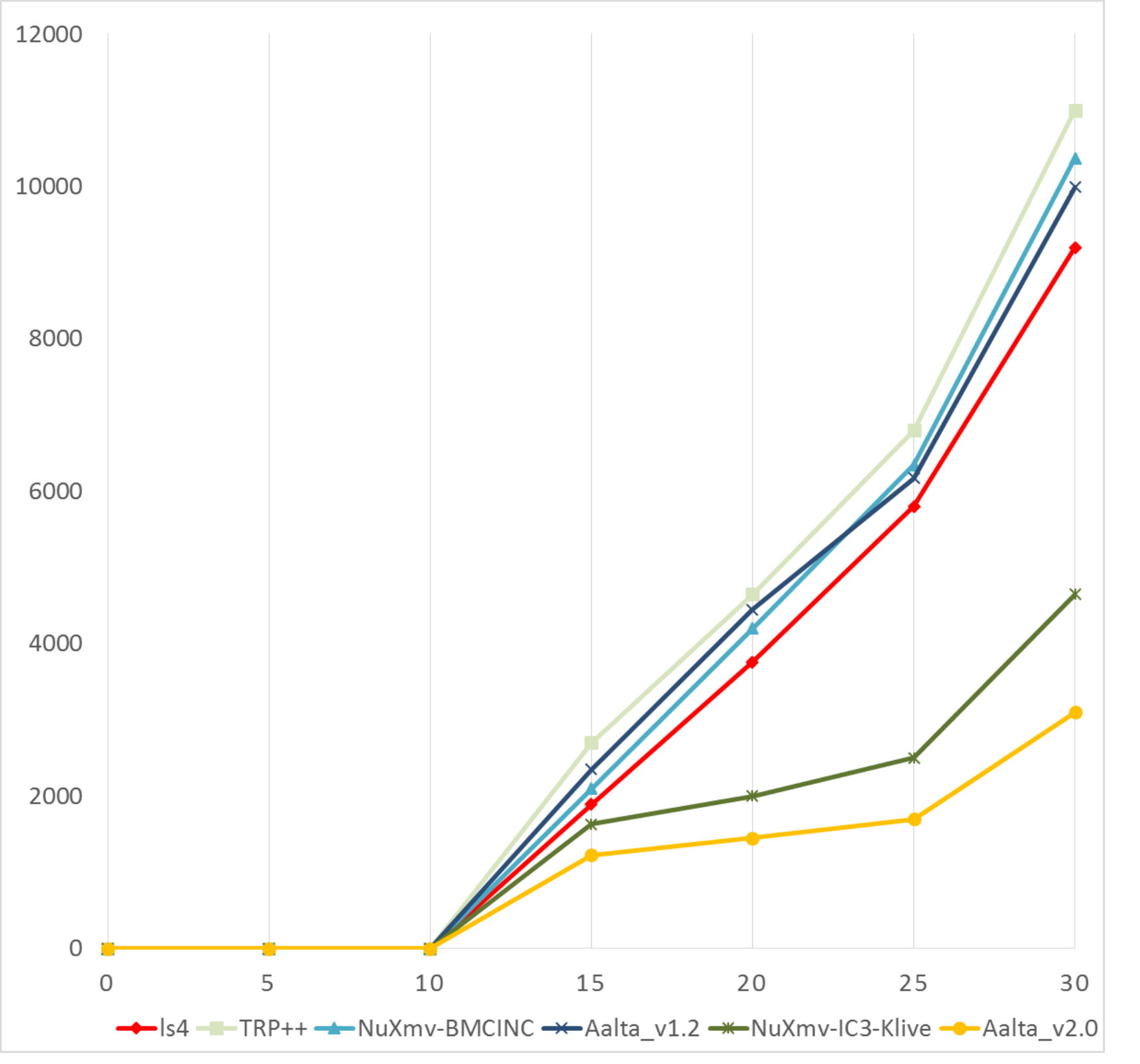}
      \caption{Results for LTL-satisfiability checking on Random Conjunction Formulas.}\label{fig:random}
\end{figure*}

This section shows more experimental results on LTL-satisfiability checking. First we complete the results in Table \ref{tab:result2} and list the results on satisfiable and unsatisfiable formulas separately, which are respectively shown in Table \ref{tab:result3} and Table \ref{tab:result4}. 
Slightly different with Table \ref{tab:result2}, each cell of these two tables lists a tuple $\langle t, n\rangle$ 
where $n$ is the total number of solved formulas and $t$ is the total checking time for solving these $n$ cases (in seconds). 
In these two table \tool is tested by using heuristics. The separation may help readers understand better of checking performance on satisfiability and unsatisfiability.

In additional to the \textit{schuppan-collected} benchmarks, we also tested all 
solvers on the \textit{random conjunction formulas}, which is proposed in \cite{LZPVH13}. 
A random conjunction formula $RC(n)$ has the form of $\bigwedge_{1\leq i\leq n}P_i(v_1,v_2,\ldots,v_k)$, 
where $n$ is the number of conjunctive elements and $P_i(1\leq i\leq n)$ is a randomly chosen pattern 
formula used frequently in practice\footnote{http://patterns.projects.cis.ksu.edu/documentation/patterns/ltl.shtml}. 
The motivation is that typical temporal assertions may be quite small in practice. And what makes 
the LTL satisfiability problem often hard is that we need to check 
\emph{large collections of small temporal formulas}, so we need to
check that the conjunction of all input assertions is satisfiable. 
In our experiment, the number of $n$ varies from 1 to 30, 
and for each $n$ a set of 100 conjunctions formulas are randomly chosen. 
The experimental results are shown in Fig. \ref{fig:random}. It shows that \tool 
(with heuristic) performs best among tested solvers, and comparing to the second best solver (NuXmv),
 it achieves approximately the 30\% speed-up. 

\section{SMT-based Temporal Reasoning}\label{sec:demonstration}\label{app:smt}
An additional motivation to base explicit temporal reasoning on SAT solving is the need to handle
LTL formulas with \emph{assertional atoms}, that is, atoms that are non-boolean state assertions, 
e.g., assertions about program variables, such as $k\leq 10$. Existing explicit temporal-reasoning 
techniques abstract such assertions as propositional atoms. Consider, for example, the LTL formula 
$\phi=\bigwedge_{1\leq i\leq n}F(k=i)$, which asserts that $k$ should assume all values 
between $1$ and $n$. By abstracting $k=i$ as $p_i$, we get the formula 
$\phi'=\bigwedge_{1\leq i\leq n}Fp_i$, but the transition system for the abstract
formula has $2^n$ states, while the transition system for the original formula has only 
$n$ states.  This problem was noted, but not solved in \cite{TRV12}, but it is obvious that 
reasoning about non-Boolean assertions requires reasoning at the assertion level. 
Basing explicit temporal reasoning on SAT solving, would enable us to lift it to 
the assertion level by using Satisfiability Modulo Theories (SMT) solving. 
SMT solving is a decision problem for logical formulas in combinations of background
theories expressed in classical first-order logic.  Examples of theories 
typically used  are the theory of real numbers, the theory of integers, 
and the theories of various data structures such as lists, arrays, bit vectors, 
and others.  SMT solvers have shown dramatic progress over the past couple of decades 
and are now routinely used in industrial software development \cite{DB08}.  

So far, we described how to use SAT solving for checking satisfiability of \emph{propositional
LTL formulas}. And in this section we show that our approach can be extended to reason assertional 
LTL formulas.  In many applications, we need to handle LTL formulas with \emph{assertional atoms}, 
that is, atoms that are non-boolean state assertions, e.g., assertions about program variables.
For example, Spin model checker uses temporal properties expressed in LTL using assertions about 
Promela state variables \cite{Hol03}.  Existing explicit temporal-reasoning tools, e.g., 
SPOT \cite{DP04}, abstract such assertions as propositional atoms. 

Recall that we utilize SAT solvers in our approach to compute assignments of formulas $\phi^p$ (with $\phi$ is in XNF). 
The states of transition system are then obtained from these assignments. When $\phi$ is an assertional LTL formula,
the formula $\phi^p$ is not a propositional formula, but a Boolean combination of theory atoms, for an appropriate theory.
Thus, our approach is still applicable, except that we need to replace the underlying SAT solver by an SMT solver.

Consider, for example the formula $\phi=(F (k=1)\wedge F (k=2))$. The XNF of $\phi$, i.e. $\f{\phi}$, is 
$((v(F(k=1))\wedge(k=1))\vee (\neg v(F(k=1))\wedge XF(k=1)))\wedge ((v(F(k=2))\wedge(k=2))\vee (\neg v(F(k=2))\wedge XF(k=2)))$. 
If we use a SAT solver, we can obtain an assignment such as 
$A=\{(k=1), v(F(k=1)),\neg XF(k=1),(k=2), v(F(k=2)),\neg XF(k=2)\}$, which is consistent
propositionally, but inconsistent theory-wise. This can be avoided by using an SMT solver.
Generally for a formula $\phi_n=\bigwedge_{1\leq i\leq n}F(k=i)$, 
there are $O(2^n)$ states generated in the transition system by the SAT-based approach, 
but only $n$ states need to be generated. This can be achieved by replacing the SAT solver 
in our approach by an SMT solvers. The performance gap between the SAT-based approach and the SMT-based
approach would be exponential. Indeed, SPOT performance on the formulas $\phi_n$ is exponential in $n$.

As proof of concept, we checked satisfiability of the formulas $\phi_n$, for $n=1,\ldots,100$, 
by Aalta\_v2.0.  We then replaced Minisat by Z3, a state-of-the-art SMT solver \cite{DB08}.
The performance results show indeed an exponential gap between the SAT-based approach and the SMT-based approach, 
which is shown in Fig. \ref{fig:demonstration}.  
(Of course, we also gain in correctness: the formula $F(k=1 \wedge k=2)$ is satisfiable
when considered propositionally, but unsatisfiable when considered assertionally.)
Applying SMT-based techniques in other temporal-reasoning tasks, such as
translating LTL to B\"uchi automata \cite{GPVW95} or to runtime monitors~\cite{TRV12},
is a promising research direction.

\begin{figure}
      \centering
      \includegraphics[scale = 0.6]{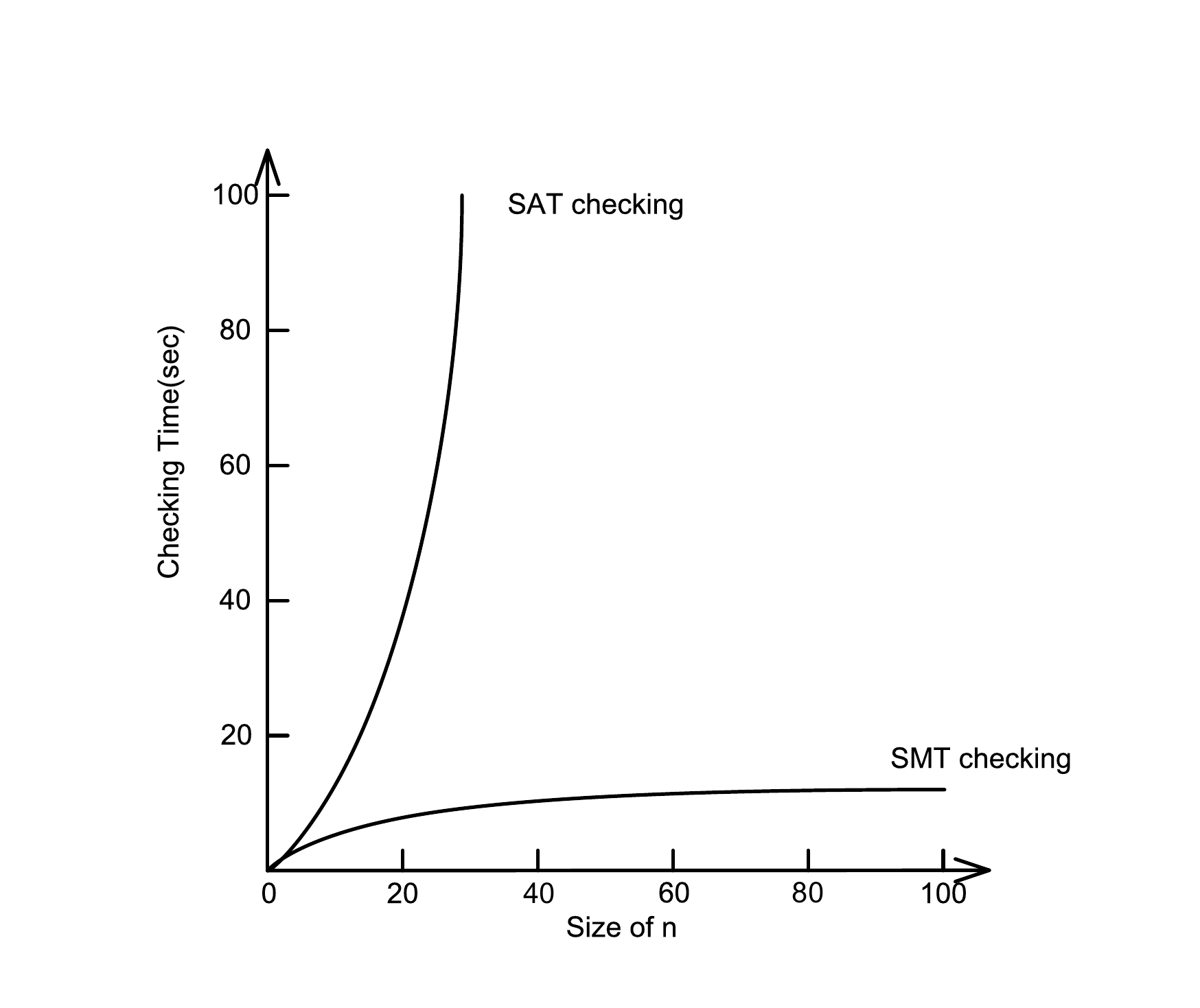}
      \caption{Results for LTL-satisfiability checking on $\bigwedge_{1\leq i\leq n}F(k=i)$.}\label{fig:demonstration}
\end{figure}

\end{document}